\newcommand{\setlabel}[2]{\def\@currentlabel{#2}\label{#1}}
\newtheorem{theorem}{\bf Theorem}[section]
\newcommand{\pd}[2]{\frac{\partial #1}{\partial #2}}
\newcommand{\bv}[1]{\bm{#1}}
\begin{document}


\newpage

\setcounter{page}{1}
\title{
Revisiting Volterra defects: Geometrical relation between edge dislocations and wedge disclinations
}

\author[1]{Shunsuke Kobayashi}
\author[1]{Katsumi Takemasa}
\author[1]{Ryuichi Tarumi}

\affil[1]{Graduate School of Engineering Science, Osaka University, 1-3 Machikaneyama-cho, Toyonaka, Osaka, 560-8531, Japan}
\date{}

\maketitle

\noindent

\begin{abstract}
    This study presents a comprehensive mathematical model for Volterra defects and explores their relations using differential geometry on Riemann--Cartan manifolds.
    Following the standard Volterra process, we derived the Cartan moving frame, a geometric representation of plastic fields, and the associated Riemannian metric using exterior algebra.
    Although the analysis naturally defines the geometry of three types of dislocations and the wedge disclination, it fails to classify twist disclinations owing to the persistent torsion component, suggesting the need for modifications to the Volterra process.
    By leveraging the interchangeability of the Weitzenb\"ock and Levi-Civita connections and applying an analytical solution for plasticity derived from the Biot--Savart law, we provide a rigorous mathematical proof of the long-standing phenomenological relationship between edge dislocations and wedge disclinations.
    Additionally, we showcase the effectiveness of novel mathematical tools, including Riemannian holonomy for analysing the Frank vector and complex potentials that encapsulate the topological properties of wedge disclinations as jump discontinuities.
    Furthermore, we derive analytical expressions for the linearized stress fields of wedge disclinations and confirm their consistency with existing results.
    These findings demonstrate that the present geometrical framework extends and generalizes the classical theory of Volterra defects.
\end{abstract}

\section{Introduction}
The regular atomic arrangement in a crystal structure is defined by 230 space groups, consisting of 32 point groups and seven Bravais lattices~\cite{aroyo_international_2016}.
These space groups determine the atomic arrangement of a perfect crystal.
The Bravais lattice defines the smallest unit, known as a primitive cell, that can fill the entire space through translational operations, and can be considered the crystal's skeletal framework.
A perfect crystal exhibits a state defined by a combination of symmetry operations~\cite{aroyo_international_2016}.
However, real crystals are not ideal and contain various structural irregularities known as lattice defects~\cite{mura_micromechanics_1987}.
Among them, one-dimensional line defects are particularly important.
These defects, known as Volterra defects, can be categorized into two types: dislocations and disclinations~\cite{volterra_sur_1907,anderson_theory_2017}.
Dislocations are associated with breaking the translational symmetry, whereas disclinations are associated with breaking rotational symmetry.
Both defect types disrupt the crystal lattice symmetry at the Bravais lattice level, significantly affecting the mechanical properties of the crystal structure.
Therefore, understanding the fundamental properties of lattice defects and achieving superior material designs through their control have been long-standing research topics in materials science and condensed-matter physics.

Theoretical analyses of Volterra defects have been conducted using differential geometry, particularly within Riemann--Cartan manifolds~\cite{katanaev_theory_1992,yavari_riemann--cartan_2012,yavari_riemann--cartan_2013,yavari_geometry_2014,golgoon_line_2018,sozio_geometric_2023,kobayashi_geometrical_2024,kobayashi_biot--savart_2024}.
A key advantage of this approach is its ability to decompose the kinematics into plastic and elastic deformations.
This multiplicative decomposition facilitates the geometric analysis of the plasticity.
For example, we recently identified the origin of the stress fields as geometrical frustration~\cite{kobayashi_geometrical_2024}.
Furthermore, the mathematical equivalence between Cartan's structure equations for plasticity, Amp\`ere's and Gauss' laws in electromagnetism, and the Cauchy--Riemann equations in complex function analysis has been elucidated~\cite{kobayashi_biot--savart_2024}.
This insight enables analytical integration and construction of complex potentials for dislocation plasticity.
In contrast to the recent advances in dislocation research, the study of disclinations, another type of Volterra defect, remains underdeveloped.
In contrast to the significant advancements in dislocation research, the study of disclinations remains relatively underdeveloped. This is partly because direct experimental observations have largely been limited to small crystals~\cite{galligan_fivefold_1972}, as disclinations typically require long-range stress fields~\cite{romanov_disclinations_1983,romanov_application_2009}.
Recent studies, however, have highlighted the emergence and significance of disclinations in deformation microstructures~\cite{li_disclination_1972,inamura_geometry_2019,egusa_recovery_2021,tokuzumi_role_2023,matsumura_numerical_2023,pranoto_evaluation_2024,zhang_origin_2024}, emphasizing their role as a strengthening mechanism in bulk materials. Clearly, a comprehensive understanding of both dislocations and disclinations, underpinned by modern differential geometry, is essential.
Early geometrical theories of disclinations were developed by Kondo~\cite{kondo_non-riemannian_1955}, Anthony~\cite{anthony_theorie_1970}, and Amari~\cite{amari_dualistic_1968,amari_dualistic_1981}, and these theories were shown to align with conventional defect theories through linearized analyses~\cite{dewit_theory_1973,de_wit_view_1981}. Interestingly, the stress fields of edge dislocations closely resemble those of wedge disclination dipoles~\cite{eshelby_simple_1966,li_disclination_1972}.
Similar connections have been reported from kinematic perspectives~\cite{dewit_partial_1972,kroner_dislocations_1975,hirth_disclinations_2020,marcinkowski_differential_1977,marcinkowski_relationship_1977,marcinkowski_differential_1978,kupferman_metric_2015,pretko_fracton-elasticity_2018}.
Despite these observations, however, to the best of the authors' knowledge, no rigorous mathematical proof has been established to substantiate this long-standing phenomenological hypothesis.
An intriguing perspective arises when examining the geometry of dislocations and disclinations through the mathematical framework of a Riemann--Cartan manifold.
This manifold consists of two classes, known as the Weitzenb\"ock and Riemannian manifolds.
Differential geometry distinguishes them according to their connections; the former includes torsion, whereas the latter includes curvature, which are regarded as mathematical representations of dislocations and disclinations, respectively~\cite{yavari_riemann--cartan_2012,yavari_riemann--cartan_2013,yavari_geometry_2014,anthony_theorie_1970,dzyaloshinskii_poisson_1980}.
However, as discussed later, the connection choice is not unique.
It is possible to interchange connections without altering the geometric states encoded in the Riemannian metric.
This mathematical arbitrariness of the connection provides a framework for unifying and classifying Volterra defects purely from a geometric perspective, which we believe will significantly advance the field of materials science.

In this study, a comprehensive mathematical model is developed for Volterra defects and their relationships are examined using differential geometry within the framework of Riemann--Cartan manifold.
The remaining of this paper is organized as follows.
Section 2 provides a brief overview of the mathematical foundations of the Riemann--Cartan manifold with a focus on the connections that play a pivotal role in the Volterra defect classification.
Section 3 introduces geometrical definitions of dislocations and disclinations based on the Volterra process.
Although the model naturally defines the geometry of the three types of dislocations and wedge disclinations, it fails to classify twist disclinations because of the persistent torsion component, indicating the need for modifications to the Volterra process.
Section 4 presents the core findings of this study.
This rigorously proves the long-standing phenomenological relationship between edge dislocations and wedge disclinations using Riemann--Cartan geometry.
Additionally, we demonstrate that an edge dislocation can be interpreted as the dipole moment of wedge disclinations.
Section 5 focuses on the mathematical analysis of disclinations.
First, the effectiveness of new mathematical tools, such as Riemannian holonomy for Frank vector evaluation and complex potentials for elucidating topological properties, including jump discontinuities, are highlighted.
Analytical expressions for linearized stress fields are derived and shown to quantitatively agree with existing results, confirming that the linearized geometric framework is fully consistent with previous studies.
Finally, Section 6 concludes the study.

\section{Mathematical foundations for Riemann--Cartan manifold}

\subsection{Cartan moving frame and Riemannian metric}

A Riemann--Cartan manifold is defined by the triplet $(\mathcal{M}, g, \nabla)$, which is a smooth manifold $\mathcal{M}$ equipped with a Riemannian metric $g$ and an affine connection $\nabla$~\cite{yavari_riemann--cartan_2012,yavari_geometry_2014}.
The metrics and connections are generalizations of the inner product and parallel transportation of vectors from Euclidean geometry on a manifold \cite{tu_differential_2017}.
For an $g$-orthonormal frame $\bv{e}_i$, the affine connection satisfies $\nabla_{\bv{e}_k}\bv{e}_j = \bv{\omega}^i_j(\bv{e}_k)\bv{e}_i$, where $\bv{\omega}^i_j=-\bv{\omega}^j_i$ is the connection 1-form.
We assumed that the affine connection $\nabla$ is compatible with the Riemannian metric $g$; that is $\nabla_{\bv{e}_i} g(\bv{e}_j, \bv{e}_k) = 0$.

The Volterra defect kinematics can be developed using the mathematical structure of the Riemann--Cartan manifold.
Three configurations are considered in this framework: reference, intermediate, and current~\cite{kobayashi_geometrical_2024,kobayashi_biot--savart_2024}.
Although these configurations share the same manifold $\mathcal{M}$, they differ in their metrics and connections.
The reference configuration represents a perfect crystal in Euclidean space, whereas the intermediate configuration requires Riemann--Cartan geometry to describe plastic deformation due to defects.
The current configuration is obtained by elastically embedding the intermediate configuration into Euclidean space.
The manifold $\mathcal{M}$ is assumed to be diffeomorphic to a subdomain of the three-dimensional Euclidean space $\mathbb{R}^3$.
Hereafter, unless stated otherwise, we employ rectangular coordinate system $\bv{x} = (x, y, z)$ and the dual basis $d\bv{x} = (dx, dy, dz)$ for the reference configuration.
Then, according to the Helmholtz decomposition, Cartan's moving frame $\bv{\vartheta} = (\bv{\vartheta}^1, \bv{\vartheta}^2, \bv{\vartheta}^3)$ on the intermediate configuration is expressed by a sum of the exact $d\bv{x}^i$ and dual exact forms $\bv{\Theta}^i$~\cite{kobayashi_geometrical_2024,kobayashi_biot--savart_2024}.
Explicitly, the moving frame and corresponding Riemannian metric are given by
\begin{align}
    \label{Eq:CoframeHelmholtzDecomposition}
    \bv{\vartheta}^i = d\bv{x}^i + \bv{\Theta}^i,
    \quad
    g = \delta_{ij} \bv{\vartheta}^i \otimes \bv{\vartheta}^j,
\end{align}
where $\delta_{ij}$ is the Kronecker delta.
Hence, Riemannian metric $g$ can be determined from the moving frame $\bv{\vartheta}^i$.

\subsection{Cartan structure equations and affine connections}

Another mathematical component of the Riemann--Cartan manifold is the affine connection $\nabla$, which incorporates geometric features, such as torsion $\bv{\tau}^i$ and curvature $\bv{\Omega}^i_j$.
Previous studies have suggested that torsion and curvature correspond to dislocations and disclinations, respectively~\cite{kondo_non-riemannian_1955,bilby_continuous_1955,kroner_allgemeine_1959,anthony_theorie_1970}.
Therefore, the connection plays a crucial role in the geometric analysis of Volterra defects.
In the standard differential geometry framework, Cartan's moving frame $\bv{\vartheta}^i$ and corresponding connection 1-form $\bv{\omega}^i_j$ are related by Cartan's first and second structure equations, which are expressed as follows:
\begin{align}
    \label{eq:CartanEquations}
    \bv{\tau}^i = d\bv{\vartheta}^i + \bv{\omega}^i_j \wedge \bv{\vartheta}^j,
    \quad
    \bv{\Omega}^i_j = d \bv{\omega}^i_j + \bv{\omega}^i_k \wedge \bv{\omega}^k_j,
\end{align}
where $d$ denotes the exterior derivative.
These equations relate the derivative of the moving frame $d\bv{\vartheta}^i$ to the torsion $\bv{\tau}^i$ and curvature $\bv{\Omega}^i_j$ through connection $\bv{\omega}^i_j$.
A key observation is that the connection choice is not unique; in fact, there are countless affine connections that satisfy Cartan's structure equations.
One possible choice is to assume that the derivative of moving frame $d\bv{\vartheta}^i$ is entirely due to torsion by simply setting the connection 1-form $\bv{\omega}^i_j$ to zero.
This construction is known as the Weitzenb\"ock connection $\nabla^W$.
In this case, Cartan's structure equations (\ref{eq:CartanEquations}) become
\begin{align}
    \label{eq:CartanEquationsTau}
    \bv{\tau}^i= d\bv{\vartheta}^i,
    \quad
    \bv{\Omega}^i_j=0.
\end{align}
The Weitzenb\"ock connection $\nabla^W$ includes non-zero torsion while maintaining a vanishing curvature~\cite{wenzelburger_kinematic_1998,yavari_riemann--cartan_2012}.
The Riemann--Cartan manifold with the Weitzenb\"ock connection $(\mathcal{M}, g, \nabla^W)$ is called Weitzenb\"ock manifold.
This mathematical construction serves as a geometric model for dislocations~\cite{yavari_riemann--cartan_2012,yavari_geometry_2014}.
Another important choice is the Levi-Civita connection, $\nabla^L$.
In this case, Cartan's structure equations are satisfied when the derivative $d\bv{\vartheta}^i$ increases solely from the curvature with vanishing torsion.
Consequently, Cartan's structure equations (\ref{eq:CartanEquations}) become
\begin{align}
    \label{Eq:CartanEquationsCurv}
    0 = d\bv{\vartheta}^i + \bv{\omega}^i_j \wedge \bv{\vartheta}^j,
    \quad
    \bv{\Omega}^i_j = d\bv{\omega}^i_j + \bv{\omega}^i_k \wedge \bv{\omega}^k_j.
\end{align}
The Riemann--Cartan manifold $(\mathcal{M}, g, \nabla^L)$ equipped with a Levi-Civita connection is called Riemannian manifold.
Riemannian manifolds are considered geometric models of disclinations~\cite{anthony_theorie_1970,yavari_riemann--cartan_2013,yavari_geometry_2014}.

\section{Geometric definition of Volterra defects}

\subsection{Continuous deformation for the Volterra process}

Figure \ref{fig:VolterraDefects} schematically illustrates the six types of Volterra defects~\cite{volterra_sur_1907}.
The cylinder is cut along the $z$-axis, where one of the cut surfaces remains fixed and the other is displaced relative to it, representing the plastic deformation caused by a Volterra defect in a perfect crystal.
The translational displacements of the cut surface along the $x$-, $y$-, and $z$-axes are shown in figures 1(a)--(c), respectively.
These displacements represent dislocations, with their magnitudes denoted by the Burgers vector $\bv{b}$.
If we consider the infinitesimal limit of the cylinder radius, the dislocation line corresponds to the $z$-axis.
Similarly, rotational displacements are applied around the $x$-, $y$-, and $z$-axes, resulting in figures 1(d)--(f), respectively.
These rotational displacements represent disclinations, and the angle $\phi$ is understood as the Frank vector.
These six types of defects are known as Volterra defects and the process of introducing defects into a perfect crystal is referred to as the Volterra process~\cite{volterra_sur_1907}.
It is important to note that the Volterra process is purely plastic, with no elastic deformations involved.

\begin{figure}[!h]
  \centering
  \includegraphics[width=0.95\textwidth]{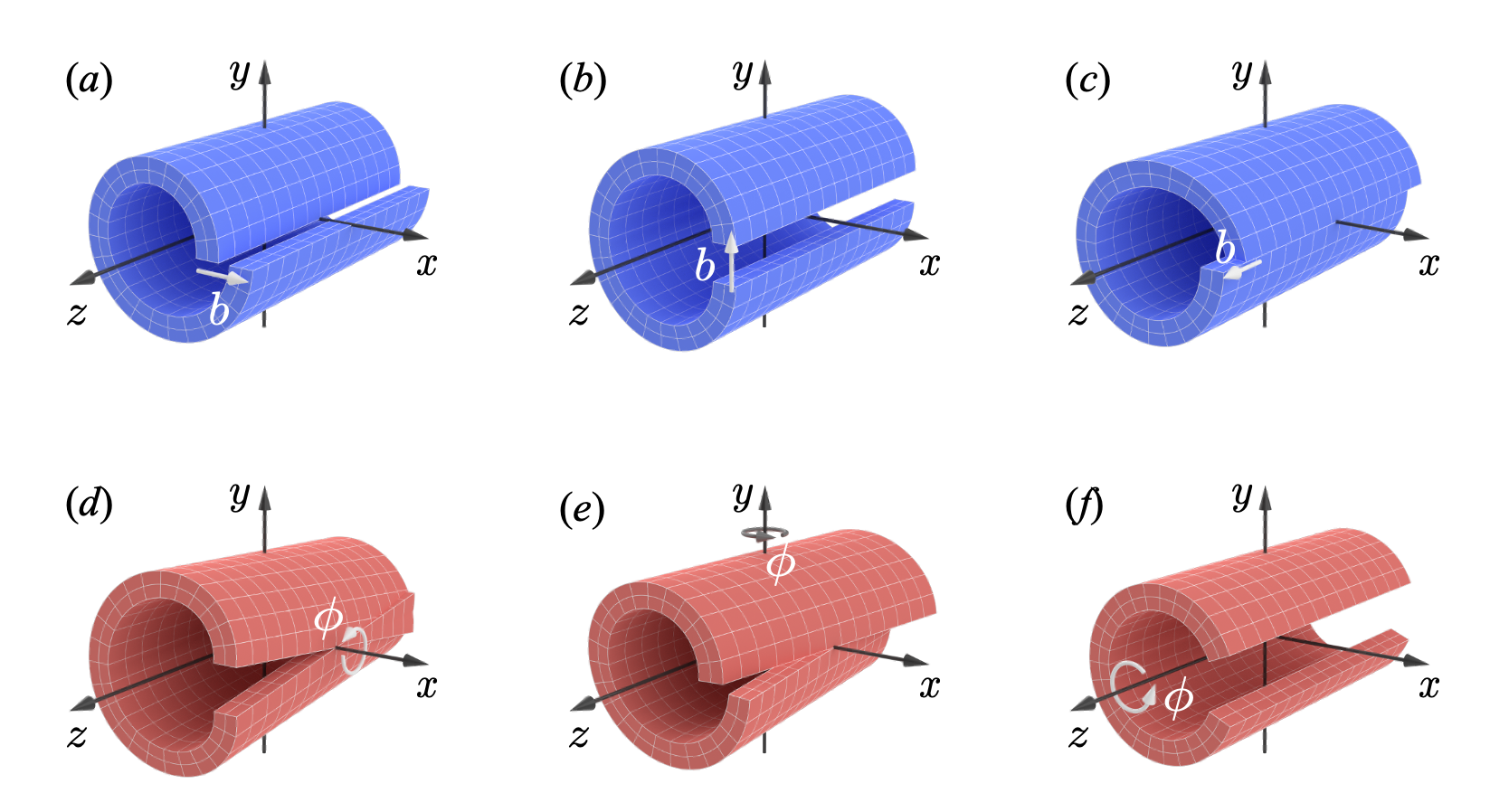}
  \caption{
    Schematic  of the formation of dislocations and disclinations using the Volterra process.
    Here, the lattice defect is aligned with the $z$-axis.
    The translational displacements perpendicular to the $z$-axis in (a) and (b) correspond to edge dislocations, while the translational displacement parallel to the $z$-axis in (c) corresponds to a screw dislocation.
    Rotations about an axis perpendicular to the $z$-axis (as seen in (d) and (e)) are called twist disclinations, whereas a rotation around an axis parallel to the $z$-axis (as seen in (f)) is referred to as a wedge disclination.  
  }
  \label{fig:VolterraDefects}
\end{figure}

We used differential geometry to define lattice defects introduced by the Volterra process.
Let us express the Volterra process by using a continuous deformation $\bv{\psi}=\bv{\psi}(\bv{x})$ excluding the defect line placed along the $z$-axis.
The Volterra process involves translational and rotational plastic deformations, as shown in figures \ref{fig:VolterraDefects}.
Therefore, its general form is given by~\cite{puntigam_volterra_1997}
\begin{align}
\label{eq:VolterraTranslationRotation}
    \bv{\psi}(\bv{x}) = \mathcal{R}(\bm{x})\bm{x} + \mathcal{T}(\bm{x}),
\end{align}
where $\mathcal{T} \in T(3)$ and $\mathcal{R} \in SO(3)$ represent the three-dimensional translation and rotation, respectively.
The derivative of the Volterra deformation, expressed as $\bv{F}_p=\nabla \bv{\Psi}$, is referred to as the plastic deformation gradient.
It should be noted that, in general, $\bv{F}_p$ cannot be represented by the gradient of a function; however, this is possible if the defects are localised on a linear or planar subdomain in $\mathbb{R}^3$ and exclude that domain from the analysis.
The Volterra deformation defines a linear map from the standard Euclidean frame $d\bm{x}$ to the Cartan frame $\bv{\vartheta}$ in such a way that $\bv{\vartheta}(\bv{x})=\bv{F}_p d\bv{x}=(\bv{I}+\nabla \bv{u}_p)d\bv{x}=d\bv{x} + \bv{\Theta}$, where $\bv{I}$ represents the $3\times 3$ identity matrix.
Comparing the result with equation (\ref{Eq:CoframeHelmholtzDecomposition}), it is obvious that the dual-exact form $\bm{\Theta}$ of the moving frame represents the change in the Euclidean frame due to the gradient of plastic displacement $\nabla \bm{u}_p$.
To simplify the analysis, we assumed that the translation and rotation of the Volterra process occurred independently rather than simultaneously.
In the case of a dislocation, we use a non-zero translation $\mathcal{T}$ and identity rotation $\mathcal{R}=\bm{I}$.
Consequently, the Volterra deformation and corresponding Cartan frame are given by
\begin{align}\label{eq:VolterraTranslation}
    \bv{\psi}(\bm{x}) = \bm{x} + \mathcal{T}(\bm{x}), \quad \bm{\vartheta}(\bm{x}) = d\bm{x} + d\mathcal{T}(\bm{x}),
\end{align}
where $d\mathcal{T} = d(\mathcal{T}(\bm{x}))$ is the exterior derivative of the translational deformation.
By contrast, disclinations involve only non-zero rotation $\mathcal{R}$ with a vanishing translation.
Consequently, the Volterra deformation and corresponding Cartan frame are given by
\begin{align}
    \label{Eq:DisclinationPsi}
    \bv{\psi}(\bm{x})=\mathcal{R}(\bm{x})\bm{x},
    \quad
    \bm{\vartheta}'=d\bm{x}+\mathcal{R}^{-1}(d\mathcal{R})\bm{x},
\end{align}
where $\bm{\vartheta}' = \mathcal{R}^{-1} \bm{\vartheta}$.
Note that $\mathcal{R}^{-1}$ represents the inverse matrix of the rotation and $d\mathcal{R} = d(\mathcal{R}(\bm{x}))$.
Note that the Riemannian metric $g$ remains unchanged owing to the local rotation by $\mathcal{R}^{-1}$.

\subsection{Edge dislocations}

The translational displacement of edge dislocations occurs in the direction perpendicular to the dislocation line.
As shown in figures \ref{fig:VolterraDefects}(a) and (b), this displacement can occur in the $x$- or $y$-direction.
However, because these two cases are essentially equivalent, we focused on analysing the edge dislocation with the Burgers vector in the $x$-direction, as illustrated in figure \ref{fig:VolterraDefects}(a).
The translational displacement of the dislocation can be described by the following mapping:
\begin{align}
    \label{Eq:EdgeDislocationVolterraMap}
    \bv{\psi}(\bm{x})
    = \bm{x} + \frac{b}{2\pi} \arctan \left( \frac{y}{x} \right)(1, 0, 0).
\end{align}
The second term on the right-hand side represents the translational displacement $\mathcal{T}(\bm{x})$ in the $x$-direction, and $b/2\pi$ is a normalization coefficient with the magnitude of Burgers vector $b$.
It is convenient to introduce cylindrical coordinates $(r, \theta, z)$, where $r = \sqrt{x^2 + y^2}$ and $\theta = \arctan (y/x)$ on a plane normal to the dislocation line.
By taking the exterior derivative of the above equation, we obtain the Cartan frame and corresponding Riemannian metric for the edge dislocation in the cylindrical coordinate, such that
\begin{align}
    \label{Eq:EdgeDislocationCoframe}
    \bv{\vartheta}=d\bv{x}+\frac{b}{2\pi}\left(d\theta,\, 0,\, 0 \right),
    \quad
    g=\begin{pmatrix}
        1 & \frac{b}{2\pi}\cos{\theta} & 0\\
        \frac{b}{2\pi}\cos{\theta} & r^2-\frac{b}{\pi}r\sin{\theta}+\frac{b^2}{4\pi^2} & 0\\
        0 & 0 & 1
    \end{pmatrix},
\end{align}
where $d\theta$ represents the exterior derivative of polar angle $\theta$, which is expressed as
\begin{align}
    \label{Eq:dtheta}
    d\theta = d\left(\arctan \frac{y}{x} \right) = -\frac{y}{x^2 + y^2} dx + \frac{x}{x^2 + y^2} dy.
\end{align}
The Riemannian metric $g$ provided in equation (\ref{Eq:EdgeDislocationCoframe}) aligns with our previous result which was obtained through integration of Cartan's first structure equation~\cite{kobayashi_biot--savart_2024}.
By determining the connection $\nabla$ associated with the moving frame $\bv{\vartheta}$, the Volterra defect can be characterized as a Riemann--Cartan manifold.
According to equation (\ref{Eq:EdgeDislocationCoframe}), the reference frame $d\bv{x}$ undergoes a continuous displacement in the $x$-direction.
Cartan structure equations (\ref{eq:CartanEquations}) represent the change in the frame as $d\bv{\vartheta}$ and relates it to the torsion $\bv{\tau}^i$ and connection $\bv{\omega}^i_j$.
However, as long as the structure equations are satisfied, the proportion of the change in $d\bv{\vartheta}$ distributed between $\bv{\tau}^i$ and $\bv{\omega}^i_j$ remains arbitrary.
One possible choice is Weitzenb\"ock connection (\ref{eq:CartanEquationsTau}), where the torsion fully accounts for the derivative of the moving frame, $d\bv{\vartheta}^i = \bv{\tau}^i$, by setting $\bv{\omega}^i_j = 0$.
This results in a Weitzenb\"ock manifold.

As shown in equation (\ref{Eq:EdgeDislocationCoframe})$_1$, the plastic deformation of an edge dislocation affects only the component $\bv{\vartheta}^1$, while the other components, $\bv{\vartheta}^2$ and $\bv{\vartheta}^3$, remain identical to those of the reference state.
Consequently, their exterior derivatives vanish, resulting in $\bv{\tau}^2 = \bv{\tau}^3 = 0$.
In contrast, applying Gauss's divergence theorem to the $\bv{\vartheta}^1$ component yields $d(d\theta) = 2\pi \delta(x)\delta(y) dx \wedge dy$, indicating non-zero torsion along the $z$-axis.
Therefore, the corresponding torsion 2-form is:
$\bv{\tau}^1=b\delta (x) \delta (y)dx\wedge dy$.
According to previous studies, the dislocation density tensor $\bv{\alpha}$ can be expressed as $\bv{\alpha} = *\bv{\tau}$, where the Hodge star operation is applied to the torsion 2-form~\cite{yavari_riemann--cartan_2012}.
Combining these results, we obtain
\begin{align}
    \label{Eq:EdgeDislocationDensity}
    \bm{\tau} = b \big(\delta(x, y) dx \wedge dy, 0, 0\big),
    \quad
    \bm{\alpha} = *\bm{\tau}=b \big(\delta(x, y) dz, 0, 0\big).
\end{align}
This shows that an edge dislocation with a Burgers vector of magnitude $b$ exists along the $z$-axis, which is consistent with the classical definition of dislocation theory.
Additionally, consider an arbitrary closed circuit $C$ encircling the origin with surface $A$ bounded by $C$, that is, $C = \partial A$.
Stokes' theorem for differential forms leads to the following relationship:
\begin{align}
    \int_C \bv{\vartheta}^1=
    \int_C \left(dx + \frac{b}{2\pi} d\theta\right)
    = \int_A b \delta(x, y) dx \wedge dy = b.
\end{align}
This confirms that $C$ is a Burgers circuit, which is consistent with the classical dislocation theory.    
For the edge dislocation shown in figure \ref{fig:VolterraDefects}(b), the only difference is the translation $\mathcal{T}$ in equation (\ref{Eq:EdgeDislocationVolterraMap}), which is now parallel to the $y$-axis, while the rest of the analysis remains essentially the same.

\subsection{Screw dislocation}

The Volterra process for the screw dislocation shown in figure \ref{fig:VolterraDefects}(c) is described via plastic deformation to the $z$-axis direction in such a way that
\begin{align}
    \label{eq:ScrewDislocationVolterraMap}
    \bv{\psi}(\bm{x}) = \bm{x} + \frac{b}{2\pi} \arctan \left(\frac{y}{x}\right)(0, 0, 1).
\end{align}
This transformation is smooth, except along the $z$-axis, which corresponds to the dislocation line.
For the Volterra deformation of the screw dislocation given in equation (\ref{eq:ScrewDislocationVolterraMap}), a direct calculation yields the analytical expression of the moving frame $\bv{\vartheta}$ and associated Riemannian metric $g$ in the cylindrical coordinate such that
\begin{align}
    \label{Eq:ScrewDislocationCoframe}
    \bv{\vartheta}=d\bv{x}
    +\frac{b}{2\pi}\left( 0,\, 0,\, d\theta \right),
    \quad
    g=\begin{pmatrix}
        1 & 0 & 0\\
        0 & r^2+\frac{b^2}{4\pi^2}& \frac{b}{2\pi}\\
        0 & \frac{b}{2\pi} & 1
    \end{pmatrix}.
\end{align}
Inserting equations (\ref{Eq:ScrewDislocationCoframe}) and (\ref{Eq:dtheta}) into Cartan's structure equations (\ref{eq:CartanEquationsTau}) with the Weitzenb\"ock connection, the torsion 2-form and corresponding dislocation density become
\begin{align}
    \label{Eq:ScrewDislocationDensity}
    \bm{\tau} = b \big(0,\, 0,\, \delta(x, y) dx \wedge dy\big), \quad
    \bm{\alpha}= b \big(0,\, 0,\, \delta(x, y) dz\big).
\end{align}
This result is consistent with the classical screw dislocation definition.
Any closed circuit $C$ encircling the origin leads the Burgers vector $b$ of the screw dislocation.

\subsection{Wedge disclination}

Similar to the systematical analysis to dislocations, we now examine the Volterra process of disclinations.
As shown in figure \ref{fig:VolterraDefects}(f), the plastic deformation of a wedge disclination is characterized by a rotation $\phi$ about the $z$-axis. The corresponding Volterra deformation can be expressed by:
\begin{align}
    \label{Eq:VolterraWedge}
    \psi(\bm{x})=\left(
    \begin{array}{ccc}
        \cos \frac{\phi}{2\pi}\theta & -\sin \frac{\phi}{2\pi}\theta & 0\\
        \sin \frac{\phi}{2\pi}\theta & \cos \frac{\phi}{2\pi}\theta & 0\\
        0 & 0 & 1
    \end{array}\right)
    \begin{pmatrix}
        x\\
        y\\
        z
    \end{pmatrix}.
\end{align}
The $3 \times 3$ matrix $\mathcal{R}(\bv{x})$ acting on $\bv{x}=(x, y, z)$ represents the rotation around the $z$-axis due to the Volterra process.
The rotation angle is proportional to $\theta$ and the maximum angle corresponds to the Frank vector $\phi$.
By considering the exterior derivative of the Volterra deformation (\ref{Eq:VolterraWedge}), the moving frame $\bm{\vartheta}'$ and associated Riemannian metric $g$ in the cylindrical coordinate can be expressed as follows:
\begin{align}
    \label{Eq:WedgeDisclinationCoframe}
    \bv{\vartheta}'=d\bv{x}+\frac{\phi}{2\pi}
    \left(- y d\theta,\, x d\theta,\, 0 \right),\quad
    g=\begin{pmatrix}
        1 & 0 &0\\
        0 & \left(1+\frac{\phi}{2\pi}\right)^2r^2 & 0
        \\
        0 & 0 & 1
    \end{pmatrix}.
\end{align}
The Riemannian metric $g$ in the above equation aligns with the previous study~\cite{yavari_riemann--cartan_2013}.
Next, we determine the connection for the frame.
As discussed in previous sections, we used the Weitzenb\"ock connection for the analysis of dislocations. 
However, for the disclination analysis, we adopt the Levi-Civita connection.
This implies that the torsion 2-form is assumed to be zero, that is, $\bv{\tau}^i = 0$, which means that we construct a curved manifold.
To simplify the expressions, we introduce representations for connection 1-form
$\bv{\bv{\omega}}=(\bv{\omega}^2_3, \bv{\omega}^3_1, \bv{\omega}^1_2)$ and corresponding curvature $\bv{\Omega}=(\Omega^2_3, \Omega^3_1, \Omega^1_2)$ obtained from Cartan's second structure equation.
In the case of a wedge disclination we have
\begin{align}
    \label{eq:WedgeCurvatureForms}
    \bv{\bv{\omega}}=\frac{\phi}{2\pi}\left( 0,\, 0,\, -d\theta \right),
    \quad
    \bv{\Omega}=\phi \left( 0,\, 0,\, -\delta(x,y)dx\wedge dy \right).
\end{align}
As $\bv{\omega}^i_k \wedge \bv{\omega}^k_j$ is zero, the curvature form $\bv{\Omega}^i_j$ is given by the exterior derivative of connection form $\bv{\omega}^i_j$.
The non-zero curvature form $\bv{\Omega}^1_2$ signifies the Frank vector, with a magnitude $\phi$, is oriented along the $z$-axis.
This result implies that the disclination line aligns with the $z$-axis.
This is the geometrical definition of a wedge disclination.

\subsection{Twist disclinations}
Similarly, the plastic deformation of a twist disclination is formed by applying a rotation to an axis perpendicular to the disclination line (see figures \ref{fig:VolterraDefects}(d) and (e)).
There are two possible choices for this rotation axis: the $x$- and $y$- axes.
However, because these two cases are essentially equivalent, we proceed with the analysis of the rotation about the $y$-axis.
In this case, the Volterra process can be described using the following continuous deformation:
\begin{align}\label{Eq:VolterraTwist}
    \psi(\bv{x})=
    \begin{pmatrix}
        \cos \frac{\phi}{2\pi}\theta & 0 & \sin \frac{\phi}{2\pi}\theta \\
        0 & 1 & 0\\
        -\sin \frac{\phi}{2\pi}\theta & 0 & \cos \frac{\phi}{2\pi}\theta\\
    \end{pmatrix}
    \begin{pmatrix}
        x\\
        y\\
        z
    \end{pmatrix}.
\end{align}
The Cartan frame $\bm{\vartheta}'$ and associated Riemannian metric $g$ in the cylindrical coordinate resulting from this mapping is calculated as follows:
\begin{align}
    \label{Eq:TwistDisclinationCoframe}
    \bv{\vartheta}'
    &=d\bv{x}+\frac{\phi}{2\pi}
    \left(z d\theta,\, 0,\, -xd\theta \right),\\
    g&=\begin{pmatrix}
        1 & \frac{\phi}{2\pi}z\cos{\theta} & 0\\
        \frac{\phi}{2\pi}z \cos{\theta}& r^2-\frac{\phi}{\pi}zr\sin{\theta}+\frac{\phi^2}{4\pi^2}(r^2\cos^2{\theta}+z^2) & -\frac{\phi}{2\pi} r\cos\theta\\
        0 & -\frac{\phi}{2\pi} r\cos\theta & 1
    \end{pmatrix}.
\end{align}
We can derive an affine connection $\bv{\omega}^i_j$ for the moving frame $\bv{\vartheta}'$ to construct a Riemannian--Cartan manifold with a twist disclination.
Similar to the wedge disclination case, we obtain the connection as follows:
\begin{align}
    \label{Eq:TwistCurvatureForms}
    \bv{\bv{\omega}}=\frac{\phi}{2\pi}\left( 0,\, -d\theta,\, 0 \right),
    \quad
    \bv{\Omega}=\phi \left( 0,\, -\delta(x,y)dx\wedge dy,\, 0 \right).
\end{align}
If we compare this curvature form $\bv{\Omega}$ with the analytical results for the wedge disclination expressed in equation (\ref{eq:WedgeCurvatureForms}), we can see that the curvature along the rotation axis is identical in both cases, and the magnitude corresponds to Frank vector $\phi$.
At first glance, it seems that we have successfully defined the twist disclination geometrically using the Volterra process given by equation (\ref{Eq:TwistDisclinationCoframe}).
However, there is a significant issue. In fact, the change in coframe $d\bv{\vartheta}'$ described by equation (\ref{Eq:TwistDisclinationCoframe}) cannot be fully captured by connection form $\bv{\omega}$ alone, leaving the following torsion unresolved:
\begin{align}\label{twisttorsion}
    \bv{\tau}=\phi z(\delta(x,y)dx\wedge dy,\, 0,\,0).
\end{align}
This result indicates that the Volterra process shown in figure \ref{fig:VolterraDefects}(e) cannot produce a pure twist disclination.
Instead, it generates an additional edge dislocation whose dislocation line aligns with the $z$-axis and has non-constant Burgers vector $\phi z$ in the $x$-axis direction.
Specifically, the Burgers vector increases proportionally to the magnitude of the Frank vector and its position along the $z$-axis.
A similar issue arises with the twist disclination expressed in figure \ref{fig:VolterraDefects}(d).
By contrast, this problem does not occur for the wedge disclination.
As indicated by equation (\ref{Eq:WedgeDisclinationCoframe}), the rotation axis of the Frank vector for the wedge disclination aligns with the dislocation line, avoiding the issue found in twist disclinations.
This mathematical result for the twist disclinations is highly inconsistent with a natural expectation that it is a purely curvature-type defect.
This discrepancy suggests that twist disclinations cannot be properly constructed using the Volterra processes illustrated in figures \ref{fig:VolterraDefects}(d) and (e).
We can obtain the essentially same conclusions by considering the Levi-Civita connection.
 In this case, the torsion becomes zero, while the curvature $\bv{\Omega}$ does not represent a pure twist disclination and other components including wedge disclination remain.
Therefore, we have to reconsider the Volterra process to introduce twist disclinations.

\section{Equivalence of edge dislocations and wedge disclinations}

\subsection{Wedge disclination dipole}
Mathematically, any affine connection $\nabla$ defines a Riemann--Cartan manifold $(\mathcal{M}, g, \nabla)$~\cite{yavari_geometry_2014}.
In this framework, changing an affine connection does not alter the geometry of Riemann--Cartan manifolds as long as they share the same Riemannian metric $g$.
More specifically, replacing the affine connection has no impact on either plastic or elastic deformations.
For instance, the example above demonstrates that two distinct manifolds, namely, the Weitzenb\"ock $(\mathcal{M}, g, \nabla^W)$ and Riemannian manifolds $(\mathcal{M}, g, \nabla^L)$, can both arise from the same moving frame $\bv{\vartheta}$ by selecting different affine connections.
This indicates that one connection can be freely replaced by another while maintaining the plastic deformation encoded in $\bv{\vartheta}$.
However, the inherent mathematical arbitrariness in the connection choice is of significance in the Volterra defect theory.
This is because the two manifolds, Weitzenb\"ock and Riemannian, can be understood as mathematical representations of different types of Volterra defects: dislocations and disclinations.
This seemingly contradictory conclusion indicates that the six types of Volterra defects are not geometrically independent.
Although similar suggestions have been made in several reports~\cite{li_disclination_1972,marcinkowski_relationship_1977,marcinkowski_differential_1978,marcinkowski_dislocations_1990}, there is no rigorous mathematical proof, as the analytical form of the plastic deformation fields $\bv{\vartheta}$ has not been elucidated.
Recently, we revealed the mathematical equivalence between Cartan's first structure equation for dislocations and Maxwell's static equations in electromagnetics~\cite{kobayashi_biot--savart_2024}.
This unexpected discovery enabled us to obtain an analytical expression for $\bv{\vartheta}$ formed around dislocations using the Biot--Savart law.
By combining this mathematical method with the geometric definition of Volterra defects presented in the previous section, we now provide mathematical proof of the long-standing phenomenological hypothesis that edge dislocations and wedge disclinations are geometrically equivalent.

As shown in figure \ref{fig:bipolar coordinate}(a), we considered a straight edge dislocation array with Burgers vector $\bv{b} = (0, b, 0)$.
This array is aligned along the $x$-axis over the finite interval $-L^- < x < L^+$ for $L^\pm>0$.
From a crystallographic perspective, this configuration is known as a symmetrical tilt boundary~\cite{hull_introduction_2011}.
It has long been hypothesized that wedge disclination dipoles are present at both ends of the dislocation array~\cite{li_disclination_1972}, although a rigorous mathematical proof has not been provided.
According to the geometrical definition of an edge dislocation given in equation (\ref{Eq:EdgeDislocationDensity}), the distribution of non-vanishing torsion caused by the edge dislocation array can be expressed as follows:
\begin{align}
    \label{Eq:EdgeDislocationArrayTorsion}
    \bm{\tau}^2
    =b \rho \delta(y) \big( H(x+L^-)-H(x-L^+) \big) dx\wedge dy,
\end{align}
where $\rho = N/(L^-+L^+)$ represents the number density of edge dislocations, and $N$ is the number of edge dislocations within the interval.
The symbols $\delta$ and $H$ denote the Dirac delta and Heaviside step functions, respectively.
Then, the following theorem holds:

\begin{figure}[!h]
  \centering
  \includegraphics[width=0.95\textwidth]{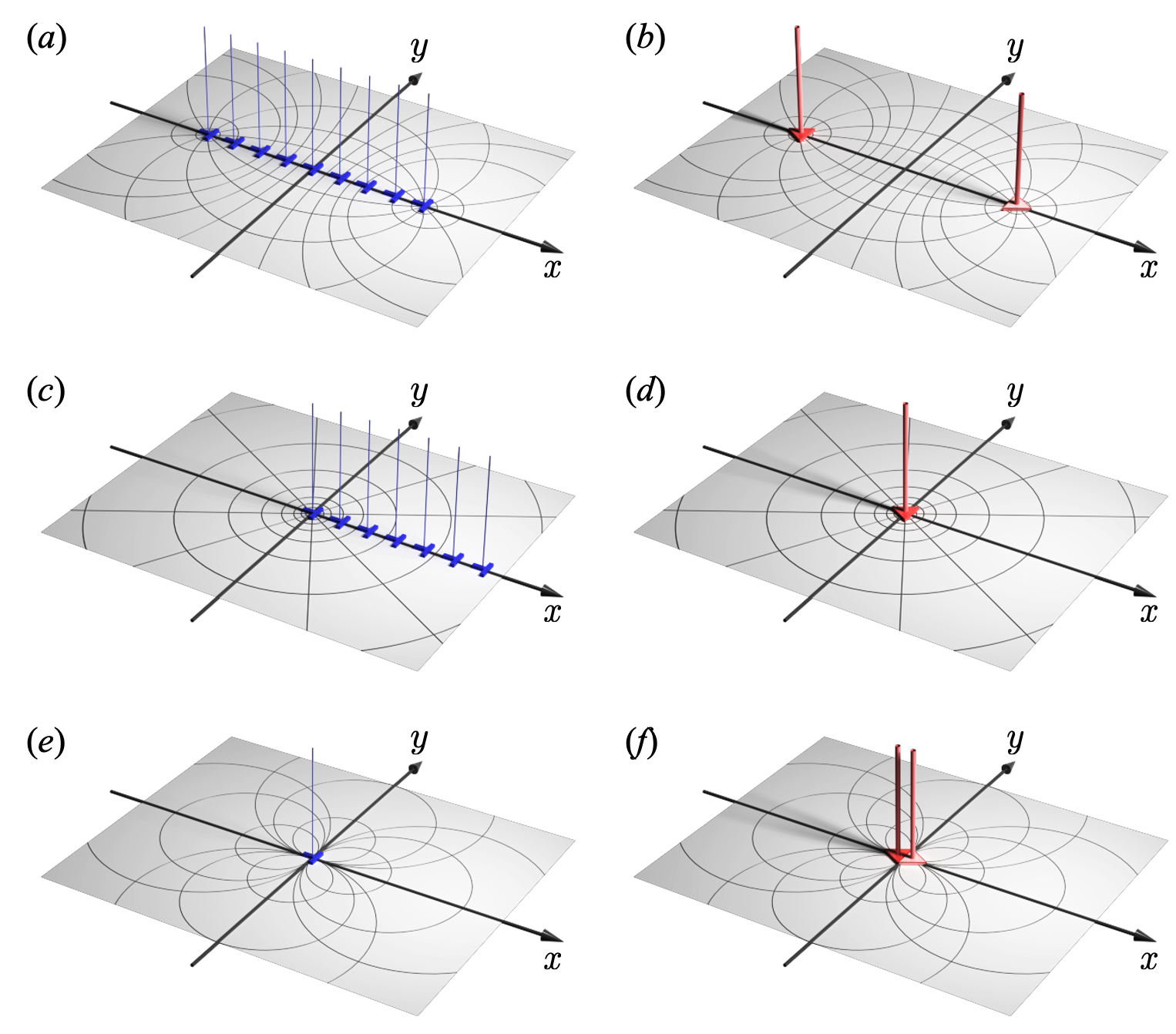}
  \caption{
  Equivalence of plastic deformation fields formed by edge dislocations and wedge disclinations.
  (a) Straight edge dislocation array and (b) equivalent wedge disclination dipole configuration.
  (c) Semi-infinite edge dislocation array and (d) equivalent wedge disclination monopole configuration.
  (e) Single-edge dislocation and (f) wedge disclination dipole with infinitesimal distance.
  The equi-contour curves for (a) and (b) are in bipolar coordinate, (c) and (d) are in log-polar coordinate, and (e) and (f) are point dipoles, which are given by the infinitesimal limit $L^\pm \to 0$ of the bipolar coordinate.
  }
  \label{fig:bipolar coordinate}
\end{figure}

\begin{theorem}[Wedge disclination dipole]
\label{thm:Dipole}
Suppose we have a straight array of edge dislocations whose torsion is given by equation (\ref{Eq:EdgeDislocationArrayTorsion}).
Then, a wedge disclination dipole exists at the terminal points of the dislocation array.
Moreover, the Frank vectors of the wedge disclinations are $\phi=\pm b\rho$.
\end{theorem}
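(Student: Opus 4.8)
The plan is to combine the interchangeability of the Weitzenb\"ock and Levi-Civita connections with the Biot--Savart superposition principle for Cartan's first structure equation. Because the Weitzenb\"ock form $\bm{\tau}^i = d\bm{\vartheta}^i$ is linear in the plastic field $\bm{\Theta}^i$, the moving frame of the whole array is the superposition of the single-dislocation frames of equation (\ref{Eq:EdgeDislocationCoframe}) with Burgers vector along $y$. I would therefore first represent the array's plastic field as the line integral
\begin{equation}
  \bm{\Theta}^2 = \frac{b\rho}{2\pi}\int_{-L^-}^{L^+} d\theta_{x'}\,dx', \qquad d\theta_{x'} = \frac{-y\,dx + (x-x')\,dy}{(x-x')^2 + y^2},
\end{equation}
and check, using $d(d\theta_{x'}) = 2\pi\,\delta(x-x',y)\,dx\wedge dy$, that $d\bm{\Theta}^2$ reproduces exactly the array torsion (\ref{Eq:EdgeDislocationArrayTorsion}).

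Next I would evaluate this elementary integral in closed form. The $dx$- and $dy$-parts integrate to differences of $\arctan$ and $\log$ terms anchored at the two terminal points $(L^+,0)$ and $(-L^-,0)$; introducing the polar angles $\theta_\pm$ and radii $r_\pm$ measured from these points---the bipolar coordinates of figure \ref{fig:bipolar coordinate}(a),(b)---the field collapses to the manifestly dipolar expression
\begin{equation}
  \bm{\Theta}^2 = \frac{b\rho}{2\pi}\left[(\theta_- - \theta_+)\,dx + \log\frac{r_-}{r_+}\,dy\right].
\end{equation}
Substituting into equation (\ref{Eq:CoframeHelmholtzDecomposition}) fixes the Riemannian metric $g$ of the array, after which the Weitzenb\"ock reading of the geometry is complete.

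With $g$ in hand I would reinterpret the same geometry through the Levi-Civita connection, i.e. impose $\bm{\tau}^i = 0$ and solve the structure equation (\ref{Eq:CartanEquationsCurv}), $0 = d\bm{\vartheta}^i + \bm{\omega}^i_j\wedge\bm{\vartheta}^j$, for the spin connection $\bm{\omega}^1_2$; the in-plane equations force $\bm{\omega}^1_2$ to be proportional to $dy$ with a coefficient supported on the tilt line $-L^-<x<L^+$, $y=0$. Forming $\bm{\Omega}^1_2 = d\bm{\omega}^1_2$ and differentiating the Heaviside pair via $\partial_x\big(H(x+L^-) - H(x-L^+)\big) = \delta(x+L^-) - \delta(x-L^+)$, the curvature reduces to two opposite Dirac singularities, one at each terminal point, with residues $\pm b\rho$. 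Comparing these residues with the single wedge-disclination curvature of equation (\ref{eq:WedgeCurvatureForms}) identifies a wedge-disclination dipole whose Frank vectors are $\phi = \pm b\rho$, which is the assertion. As an independent check I would recompute each $\phi$ as the Riemannian holonomy $\oint\bm{\omega}^1_2$ around a small loop enclosing a single endpoint and confirm it equals the Read--Shockley misorientation $b\rho$ of the tilt boundary.

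The main obstacle is this Levi-Civita step. The metric fixes the connection uniquely, yet the structure equation determines $\bm{\omega}^1_2$ only up to a closed form, and the genuine metric connection turns out to be distributional---concentrated along the tilt line like a Dirac string---so care is required both to select it correctly and, above all, to verify that its curvature is concentrated purely at the two ends rather than spread along the open segment. A secondary, distributional difficulty is that the superposition integrand is singular all along the tilt line, so the exterior differentiations that generate the endpoint deltas must be justified; the safest route is to read off each Frank vector as a holonomy integral $\oint\bm{\omega}^1_2$ evaluated on an annulus about the relevant terminal point via Stokes' theorem, rather than by differentiating the closed-form expression term by term.
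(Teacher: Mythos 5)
Your proposal is correct and takes essentially the same route as the paper: your superposition integral of single-dislocation kernels is exactly the Biot--Savart convolution the paper uses to obtain the closed-form $\bm{\Theta}^2$ of equation (\ref{Eq:EdgeDislocationArrayCoframe}), after which both arguments swap the Weitzenb\"ock connection for the Levi-Civita one, solve the first structure equation for a $\bm{\omega}^1_2$ proportional to $dy$ and supported on the tilt line, and differentiate the Heaviside pair to produce curvature deltas of strength $\pm b\rho$ at the two endpoints. Your concluding holonomy check $\oint\bm{\omega}^1_2$ is not part of the paper's proof proper but appears there as an independent verification in the Riemannian-holonomy analysis of Section 5.1.
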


\begin{proof}
First, we determine the plastic deformation fields formed by the edge dislocations using the Biot--Savart law.
This is based on the mathematical equivalence of Cartan's first structure equation for plasticity and Amp\`ere's and Gauss' law in electromagnetics~\cite{kobayashi_biot--savart_2024}.
More precisely, the plastic displacement gradient $\bv{\Theta}^i$ of dislocations can be calculated from the dislocation density $\bv{\alpha}^i$ in the following form:
\begin{align}
  \label{Eq:Biot-Savart}
  \bv{\Theta}^i(\bv{x}) = \frac{1}{4\pi} \int_{\mathbb{R}^3}
  \frac{\bv{\alpha}^i(\bv{\xi})\times (\bv{x}-\bv{\xi})}{\|\bv{x}-\bv{\xi}\|^3} dV.
\end{align}
By inserting the Hodge dual of the torsion $\bm{\tau}^2$ given in equation (\ref{Eq:EdgeDislocationArrayTorsion}) into (\ref{Eq:Biot-Savart}), we obtain the analytical expression for the plastic displacement gradient as follows:
\begin{align}
\label{Eq:EdgeDislocationArrayCoframe}
    \bv{\Theta}^2
    =\frac{b\rho}{2\pi}\left(
    \mathrm{arg}\biggl(\frac{y}{x+L^-}\biggr)
    -\mathrm{arg}\biggl(\frac{y}{x-L^+}\biggr),\ 
    \frac{1}{2}
    \ln\frac{(x+L^-)^2+y^2}{(x-L^+)^2+y^2},\ 0
    \right),
\end{align}
where $\bv{\Theta}^1=\bv{\Theta}^3=\bv{0}$ and $\mathrm{arg}(y/x)$ are defined by using the inverse tangent $\arctan(y/x)$ as follows:
\begin{align}
    \mathrm{arg}\biggl(\frac{y}{x}\biggr)=\begin{cases}
        0 & y=0\\
        \arctan(y/x) & x\leq 0\\
        \arctan(y/x)-\pi & x>0,\ y>0\\
        \arctan(y/x)+\pi & x>0,\ y<0\\
    \end{cases}.
\end{align}
By definition, $-\pi<\mathrm{arg}(y/x)<\pi$ represents the polar angle of a point $(x,y)$ measured from the negative $x$-axis except for the discontinuous singularity along $y=0\, (x> 0)$.
According to the Helmholtz decomposition (\ref{Eq:CoframeHelmholtzDecomposition}), the moving frame becomes $\bv{\vartheta}^2=(dx, dy+\bv{\Theta}^2, dz)$.
A direct calculation shows that the moving frame satisfies Cartan's structure equations (\ref{eq:CartanEquationsTau}) with the Weitzenb\"ock connection.

We expressed the same plastic deformation encoded in $\bv{\vartheta}^2$ using wedge disclination, rather than edge dislocations.
As discussed in section 2, this can be achieved by replacing the Weitzenb\"ock connection with the Levi-Civita connection.
To this end, we insert the moving frame $\bv{\vartheta}^2$ into Cartan's structure equations (\ref{Eq:CartanEquationsCurv}) with the Levi-Civita connection.
From the first structure equation (\ref{Eq:CartanEquationsCurv})$_1$, we obtain
\begin{align}
    \bv{\omega}^1_2\wedge (dy+\bv{\Theta}^2)-\bv{\omega}^3_1\wedge dz=0,\quad
    \bv{\omega}^1_2\wedge dx-\bv{\omega}^2_3\wedge dz=d\bv{\vartheta}^2,\quad
    \bv{\omega}^3_1\wedge dx -\bv{\omega}^2_3\wedge (dy+\bv{\Theta}^2)=0.
\end{align}
Consequently, we have the following non-vanishing connection form:
\begin{align}\label{eq:Curvature2FormFiniteArray}
     \bv{\omega}^1_2=-b \rho \delta(y) \big( H(x+L^-)-H(x-L^+) \big)dy,
     \quad
     \bv{\omega}^2_3=\bv{\omega}^3_1=0.
 \end{align}
By substituting the corresponding result into the second structure equation (\ref{Eq:CartanEquationsCurv})$_2$, we obtain the non-vanishing curvature responsible for the plastic deformation field $\bv{\vartheta}^2$:
\begin{align}
    \bv{\Omega}^1_2=b\rho \big( \delta(x-L^+,y)-\delta(x+L^-,y) \big) dx\wedge dy.
\end{align}
Note that we used the fundamental relation between the Heviside step function and Dirac delta function: $\frac{d}{dx}H(x)=\delta(x)$.
Following the geometric definition of the wedge disclination (\ref{eq:WedgeCurvatureForms}), we can conclude on the existence of a pair of wedge disclinations at $\bv{x}^+=(L^+,\, 0)$ with Frank vector $\phi= b\rho$
and at $\bv{x}^-=(-L^-,\, 0)$ with $\phi=-b\rho$.
\end{proof}

The geometric equivalence between the edge dislocation array and wedge disclination dipole can be readily understood by introducing a bipolar coordinate system.
To simplify the analysis, let us set $L^- = L^+ = L$.
It is well known that bipolar coordinates use two components $(\sigma, \tau)$ and have two foci at $\bv{x}^{\pm} = (0, \pm L)$. At any point $\bv{x}$, the coordinates are given by $\tau = \ln \left( d^-/d^+ \right)$, with $d^{\pm} = \lvert \bv{x} - \bv{x}^{\pm} \rvert$, and $\sigma = \mathrm{arg}(\frac{y}{x - L}) - \mathrm{arg}(\frac{y}{x + L})$.
The relationship between the Euclidean $(x, y)$ and bipolar $(\sigma, \tau)$ coordinates is given by
\begin{align} x = \frac{L\sinh \tau}{\cosh \tau - \cos \sigma},\quad y = \frac{L\sin \sigma}{\cosh \tau - \cos\sigma}.
\end{align}
Then, the plastic displacement gradient (\ref{Eq:EdgeDislocationArrayCoframe}) can be simplified in bipolar coordinates as
\begin{align} \bv{\Theta}^2 = \frac{b\rho}{2\pi} \left( -\sigma,\ \tau,\ 0 \right).
\end{align}
Figure \ref{fig:bipolar coordinate}(a) shows the distribution of the plastic deformation fields $\bv{\Theta}^2$ using the bipolar coordinate system.
The same plastic deformation fields are obtained by placing a wedge disclination dipole at the foci of the bipolar coordinate system (see figure \ref{fig:bipolar coordinate}(b)).

\subsection{Wedge disclination monopole}

Next, we consider a semi-infinite edge dislocation array.
As shown in figure \ref{fig:bipolar coordinate}(c), the dislocations are distributed uniformly on the positive side of the $x$-axis.
Then, the torsion 2-form is given by:
\begin{align}\label{eq:Semi-infiniteEdgeDislocationTorsion}
    \bv{\tau}^2=b\rho\delta(y)H(x)dx\wedge dy,
\end{align}
where $\rho$ denotes the edge dislocation number density.
Then, we have the following theorem.

\begin{theorem}[Wedge disclination monopole]
\label{thm:Monopole}
Suppose we have a semi-infinite array of edge dislocations whose torsion is given by equation (\ref{eq:Semi-infiniteEdgeDislocationTorsion}).
Then, a single wedge disclination exists at the coordinate origin whose Frank vector is $\phi=-b\rho$.
\end{theorem}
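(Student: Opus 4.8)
The plan is to follow the proof of Theorem~\ref{thm:Dipole} step for step, regarding the semi-infinite array (\ref{eq:Semi-infiniteEdgeDislocationTorsion}) as the degenerate limit $L^-\to 0$, $L^+\to\infty$ of the finite array (\ref{Eq:EdgeDislocationArrayTorsion}). First I would Hodge-dualise the torsion to $\bv{\alpha}^2=*\bv{\tau}^2=b\rho\,\delta(y)H(x)\,dz$ and insert it into the Biot--Savart integral (\ref{Eq:Biot-Savart}). Performing the $\xi_z$-integration first collapses the three-dimensional kernel to $2/R^2$ with $R^2=(x-\xi_x)^2+y^2$, after which the remaining $\xi_x$-integral over $(0,\infty)$ is elementary and should reproduce the $L^-\to 0$, $L^+\to\infty$ limit of (\ref{Eq:EdgeDislocationArrayCoframe}),
\begin{align}
    \bv{\Theta}^2=\frac{b\rho}{2\pi}\Bigl(\mathrm{arg}(y/x),\ \tfrac{1}{2}\ln(x^2+y^2),\ 0\Bigr),
\end{align}
that is, $\tfrac{b\rho}{2\pi}(\theta,\ln r,0)$ in the log-polar coordinates of figure~\ref{fig:bipolar coordinate}(c).

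Next I would assemble the moving frame $\bv{\vartheta}^2=(dx,\,dy+\bv{\Theta}^2,\,dz)$, replace the Weitzenb\"ock connection by the Levi--Civita one, and solve Cartan's first structure equation (\ref{Eq:CartanEquationsCurv})$_1$ for the connection forms. By direct analogy with (\ref{eq:Curvature2FormFiniteArray}), I expect the only surviving component to be $\bv{\omega}^1_2=-b\rho\,\delta(y)H(x)\,dy$, with $\bv{\omega}^2_3=\bv{\omega}^3_1=0$. Substituting into the second structure equation (\ref{Eq:CartanEquationsCurv})$_2$, in which the quadratic term vanishes, and using $\tfrac{d}{dx}H(x)=\delta(x)$, yields
\begin{align}
    \bv{\Omega}^1_2=d\bv{\omega}^1_2=-b\rho\,\delta(x,y)\,dx\wedge dy,
\end{align}
a single curvature concentration at the origin. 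This is precisely the curvature carried by the negative-end disclination $\bv{x}^-$ of Theorem~\ref{thm:Dipole}; identifying it against the wedge disclination (\ref{eq:WedgeCurvatureForms}) in the same way gives a lone wedge disclination at $\bv{x}=\bv{0}$ with Frank vector $\phi=-b\rho$.

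The hard part is that the semi-infinite array carries infinite total Burgers content, so the $\xi_x$-integral producing the $dy$-slot of $\bv{\Theta}^2$ diverges logarithmically: the lower limit $\xi_x\to\infty$ contributes a formally infinite but $y$-independent additive constant. I would argue this constant is geometrically inert. Being constant, it is annihilated by $d$, so $d\bv{\vartheta}^2$ is unaffected; and being a pure $dy$-coefficient, it cancels in every product $\bv{\omega}^1_2\wedge\bv{\vartheta}^j$ once $\bv{\omega}^1_2\propto dy$, since $dy\wedge dy=0$. Hence neither $\bv{\omega}$ nor $\bv{\Omega}$ depends on it, and the disclination content is well defined even though the global metric is not. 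As an independent confirmation I would pass to the limit directly in Theorem~\ref{thm:Dipole}: in the finite-array curvature $b\rho\big(\delta(x-L^+,y)-\delta(x+L^-,y)\big)dx\wedge dy$, the positive disclination at $(L^+,0)$ is pushed to spatial infinity as $L^+\to\infty$, while the negative one at $(-L^-,0)$ converges to the origin as $L^-\to 0$, leaving exactly $\bv{\Omega}^1_2=-b\rho\,\delta(x,y)\,dx\wedge dy$ and $\phi=-b\rho$.
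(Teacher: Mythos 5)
Your proposal is correct and follows essentially the same route as the paper: the paper likewise obtains $\bv{\Theta}^2=\frac{b\rho}{2\pi}\left(\mathrm{arg}(y/x),\,\ln\sqrt{x^2+y^2},\,0\right)$ as the regularized $L^-\to 0$, $L^+\to\infty$ limit of the dipole solution (adding the counterterm $\frac{b\rho}{2\pi}(0,\ln L^+,0)$ to remove exactly the divergent constant you identify), then replaces the Weitzenb\"ock connection by the Levi-Civita one, solves the first structure equation to get $\bv{\omega}^1_2=-b\rho\,\delta(y)H(x)\,dy$ with $\bv{\omega}^2_3=\bv{\omega}^3_1=0$, and reads off $\bv{\Omega}^1_2=-b\rho\,\delta(x,y)\,dx\wedge dy$, hence $\phi=-b\rho$. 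Your explicit justification that the divergent constant is geometrically inert, and your cross-check by taking the limit of the finite-array curvature of Theorem \ref{thm:Dipole}, merely elaborate what the paper states in one line, so the two proofs coincide in substance.
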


\begin{proof}
Similar to the previous case, we prove the theorem using an analytical expression of plastic deformation fields.
These are obtained by considering the limits $L^-\to 0$ and $L^+\to \infty$ introduced in the previous solution (\ref{Eq:EdgeDislocationArrayCoframe}).
However, directly taking the limit yields a non-physical divergence in the plastic deformation field, $\Theta^2_2\to\infty$.
To address this mathematical issue, we include an additional term $\frac{b\rho}{2\pi}(0,\ln{L^+},0)$ in the plastic deformation field (\ref{Eq:EdgeDislocationArrayCoframe}) to counteract the inappropriate divergence.
This modification is justified because it does not alter Cartan's first structure equation.
By applying this adjustment and subsequently taking the limits $L^-\to 0$ and $L^+\to \infty$, the resulting non-vanishing plastic displacement gradient for the semi-infinite dislocation array becomes:
\begin{align}
\label{eq:ThetaWedgeMonopole}
    \bv{\Theta}^2= \frac{b\rho}{2\pi}\left(\mathrm{arg} \biggl(\frac{y}{x}\biggr), \ln\sqrt{x^2+y^2}, 0\right),
\end{align}
where $\bv{\Theta}^1=\bv{\Theta}^3=\bv{0}$.
Therefore, the Cartan moving frame becomes $\bv{\vartheta}^2=(dx, dy+\bv{\Theta}^2, dz)$.
This moving frame satisfies Cartan's structure equations (\ref{eq:CartanEquationsTau}) with the Weitzenb\"ock connection.
Subsequently, we insert the plastic deformation field $\bv{\vartheta}^2$ into Cartan's structure equations (\ref{Eq:CartanEquationsCurv}) with the Levi-Civita connection.
From the first structure equation (\ref{Eq:CartanEquationsCurv})$_1$, we obtain
\begin{align}
        {\bv{\omega}^1_2}\wedge (dy+ \bv{\Theta}^2 )-{\bv{\omega}^3_1}\wedge dz=0,\quad
        {\bv{\omega}^1_2}\wedge dx-{\bv{\omega}^2_3}\wedge dz=d\bv{\vartheta}^2,\quad
        {\bv{\omega}^3_1}\wedge dx-{\bv{\omega}^2_3}\wedge (dy+\bv{\Theta}^2 )=0.
\end{align}
Consequently, we have a non-vanishing connection form
\begin{align}\label{eq:Curvature2FormSemiinfiniteArray}
    \bv{\omega}^1_2=-b\rho \delta(y)H(x)dy,\quad
    \bv{\omega}^2_3=\bv{\omega}^3_1=0.
\end{align}
By substituting the corresponding result into the second structure equation (\ref{Eq:CartanEquationsCurv})$_2$, we obtain the curvature responsible for the plastic deformation field $\bv{\vartheta}^2$:
\begin{align}
    \bv{\Omega}^1_2=-b\rho\delta(x,y)dx\wedge dy.
\end{align}
Following the geometric definition of wedge disclinations (\ref{eq:WedgeCurvatureForms}), we can conclude that a single wedge disclination exists at the coordinate origin with Frank vector $\phi= -b\rho$.
Note that the sign of the Frank vector changes if we consider the semi-infinite array on the other side, which is described by the limits $L^-\to \infty$ and $L^+\to 0$.
\end{proof}

As in the previous case, we can represent the plastic deformation fields of the wedge disclination monopole using a log-polar coordinate system, denoted by $(\varrho,\theta)=(\ln\sqrt{x^2+y^2},\mathrm{arg}(y/x))$.
By substituting this relationship into equation (\ref{eq:ThetaWedgeMonopole}), we obtain
\begin{align}
    \bv{\Theta}^2=\frac{b\rho}{2\pi}(\theta, \varrho,0).
\end{align}
Figure \ref{fig:bipolar coordinate}(d) illustrates the plastic deformation field distribution $\bv{\Theta}^2$ resulting from a single wedge disclination expressed in the log-polar coordinate system.
Again, this result is identical to the semi-infinite edge dislocation array shown in figure \ref{fig:bipolar coordinate}(c).
In actual crystalline materials, a semi-infinite edge dislocation array can represent a state in which one end of the dislocation array is included within the material, while the other end is exposed.
Consequently, wedge disclination monopoles could exist within the material.
Indeed, single wedge disclinations have been observed in nanoscale crystals and can be considered analogous to the semi-infinite edge dislocation array configuration~\cite{galligan_fivefold_1972}.
It is well known that many physical phenomena can be represented using the bipolar coordinate system, including magnetic and electric fields.
Notably, magnetic fields cannot be isolated. There are no magnetic monopoles, whereas electric fields can originate from monopoles.
Accordingly, disclinations can be regarded as defects with properties fundamentally similar to those of electric fields.

\subsection{Dipole momentum of wedge disclinations}

Finally, let us consider the case in which an edge dislocation exists independently.
Following the previous examples, let the dislocation line lie along the $z$-axis with Burgers vector $\bv{b}=(0, b, 0)$, as shown in figure \ref{fig:bipolar coordinate}(e).
In this case, the non-vanishing torsion 2-form is given as
\begin{align}\label{eq:SingleEdgeDislocationTorsion}
    \bv{\tau}^2=b\delta(x,y)dx\wedge dy.
\end{align}
Then, we have the following theorem.

\begin{theorem}[Single edge dislocation]
\label{thm:DipoleMomentum}
Suppose we have a single edge dislocation whose torsion is given by equation (\ref{eq:SingleEdgeDislocationTorsion}).
Then, the plastic deformation field around the dislocation is equivalent to that generated by a wedge disclination dipole at an infinitesimal distance.
\end{theorem}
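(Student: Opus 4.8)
The plan is to obtain the single edge dislocation as the coincidence limit $L^\pm \to 0$ of the finite dislocation array of Theorem~\ref{thm:Dipole}, so that its disclination content can be read off directly from the already-established dipole. First I would specialise the array to a single dislocation by setting $N=1$, whence $\rho(L^-+L^+)=1$ and the terminal Frank vectors are $\phi=\pm b\rho=\pm b/(L^-+L^+)$. With this normalisation the torsion (\ref{Eq:EdgeDislocationArrayTorsion}) is a box of width $L^-+L^+$ and height $b\rho$ supported on $y=0$; since its integral over the $xy$-plane equals $b\rho(L^-+L^+)=b$ for every $L^\pm$, letting $L^\pm\to 0$ collapses the box to $b\,\delta(x,y)\,dx\wedge dy$, which is exactly the single-dislocation source (\ref{eq:SingleEdgeDislocationTorsion}). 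Thus the two configurations share the same defect source in the limit.

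Next I would pass to the limit in the analytic field (\ref{Eq:EdgeDislocationArrayCoframe}). Writing $f(a)=\mathrm{arg}\bigl(y/(x-a)\bigr)$ and $g(a)=\tfrac12\ln\bigl((x-a)^2+y^2\bigr)$, the two nonzero coefficients of $\bv{\Theta}^2$ are $f(-L^-)-f(L^+)$ and $g(-L^-)-g(L^+)$; each difference vanishes like $L^-+L^+$ while the prefactor $b\rho=b/(L^-+L^+)$ diverges. Resolving this $\infty\cdot 0$ indeterminacy by a first-order Taylor expansion about $a=0$, namely $f(-L^-)-f(L^+)=-f'(0)(L^-+L^+)+O(L^2)$ and likewise for $g$, the prefactor cancels the separation and the field converges to $\tfrac{b}{2\pi}\bigl(-f'(0)\,dx-g'(0)\,dy\bigr)$. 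A short computation gives $f'(0)=y/(x^2+y^2)$ and $g'(0)=-x/(x^2+y^2)$, so the limit is precisely $\tfrac{b}{2\pi}d\theta$ of (\ref{Eq:dtheta}), and the moving frame tends to $\bv{\vartheta}^2=(dx,\ dy+\tfrac{b}{2\pi}d\theta,\ dz)$, i.e.\ the single edge dislocation of (\ref{Eq:EdgeDislocationCoframe}) with Burgers vector along $y$. Verifying that this frame satisfies (\ref{eq:CartanEquationsTau}) with torsion (\ref{eq:SingleEdgeDislocationTorsion})---equivalently that it is the Biot--Savart field (\ref{Eq:Biot-Savart}) of that source---closes the equivalence of plastic fields.

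It then remains to interpret the limit geometrically. Taking $L^-=L^+=L$, Theorem~\ref{thm:Dipole} places the wedge disclinations at $\bv{x}^\pm=(\pm L,0)$ with Frank vectors $\phi=\pm b/(2L)$, so as $L\to 0$ the two opposite disclinations merge at the origin and $\phi\to\pm\infty$, yet the product $\phi\cdot(2L)=b$---the disclination dipole moment---stays fixed at the Burgers magnitude. This is exactly the point-dipole limit, consistent with the reading that the first-order Taylor term extracted above is itself the dipole field, and with the degeneration of the bipolar equi-contours to the point dipole of figures~\ref{fig:bipolar coordinate}(e) and (f). The main obstacle is the careful handling of this limit: the plastic field is an indeterminate $\infty\cdot 0$ expression, so one must justify the first-order expansion uniformly on compact sets away from the origin and confirm that the pointwise limit of the field is compatible with the distributional limit of the torsion, so that the diverging Frank vectors and the vanishing separation together leave the finite, well-defined single-dislocation field.
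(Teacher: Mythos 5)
Your proposal is correct, but it follows a genuinely different route from the paper's own proof of Theorem~\ref{thm:DipoleMomentum} --- in fact it is essentially the limiting argument that the paper only sketches \emph{after} the proof as an ``alternative proof,'' culminating in equation~(\ref{eq:LimitDipole}). The paper's proof never takes a limit: it works intrinsically on the single-dislocation configuration by feeding the torsion~(\ref{eq:SingleEdgeDislocationTorsion}) into the Biot--Savart law~(\ref{Eq:Biot-Savart}) to get $\bv{\Theta}^2=\frac{b}{2\pi}\left(-\frac{y}{x^2+y^2},\,\frac{x}{x^2+y^2},\,0\right)$, then swapping the Weitzenb\"ock connection for the Levi-Civita one in Cartan's structure equations~(\ref{Eq:CartanEquationsCurv}), which produces a distributional curvature $\bv{\Omega}^1_2$ containing the derivative-of-delta term $-b\,\delta(y)\frac{d}{dx}\delta(x)$, and finally evaluating the first moment $\bv{P}=\int_{\mathbb{R}^2}\bv{\Omega}^1_2\,\bv{x}=(b,0)$ as in equation~(\ref{eq:dipole moment}); ``dipole at infinitesimal distance'' is thereby given a precise meaning as a point-dipole curvature density with moment $b$ perpendicular to the Burgers vector. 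You instead reuse Theorem~\ref{thm:Dipole}: with $N=1$, so $\rho(L^-+L^+)=1$, you verify that the torsion box converges distributionally to $b\,\delta(x,y)\,dx\wedge dy$ and resolve the $\infty\cdot 0$ indeterminacy in~(\ref{Eq:EdgeDislocationArrayCoframe}) by first-order Taylor expansion, recovering~(\ref{eq:SingleEdgeDislocationDisplacementGradient}) while the Frank vectors $\pm b\rho$ diverge with the product $b\rho(L^-+L^+)=b$ held fixed; your derivative computations are right and reproduce~(\ref{eq:LimitDipole}). What the paper's route buys is that the ``equivalence'' is established without ever having to say in what sense a limit of configurations is itself a configuration --- the dipole character is read off from the curvature distribution directly; what your route buys is a constructive physical picture (two disclinations merging with diverging Frank vectors and conserved moment) and an explicit bridge to Theorem~\ref{thm:Dipole}, at the cost of the uniform-convergence and limit-interchange care you correctly flag. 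To make your version fully self-contained you should still state in what sense the limiting object \emph{is} a disclination dipole; the cleanest way is to append precisely the paper's moment integral as a final step, which would merge your argument with theirs.
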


\begin{proof}
By inserting the Hodge dual of torsion 2-form of the single edge dislocation (\ref{eq:SingleEdgeDislocationTorsion}) into the Biot--Savart law (\ref{Eq:Biot-Savart}), the plastic displacement gradient becomes
\begin{align}
\label{eq:SingleEdgeDislocationDisplacementGradient}
    \bv{\Theta}^2= \frac{b}{2\pi} \left( -\frac{y}{x^2+y^2}, \frac{x}{x^2+y^2}, 0\right),
\end{align}
where $\bv{\Theta}^1=\bv{\Theta}^3=\bv{0}$.
Therefore, the Cartan moving frame becomes $\bv{\vartheta}=(dx, dy+\bv{\Theta}^2, dz)$.
Note that this result is essentially equivalent to equation (\ref{Eq:EdgeDislocationCoframe}) because it represents another edge dislocation with Burgers vector $\bv{b}=(b,0,0)$.
By inserting the plastic deformation field $\bv{\vartheta}^2$ into Cartan's first structure equation (\ref{Eq:CartanEquationsCurv})$_1$ with the Levi-Civita connection, we obtain the following:
\begin{align}
        {\bv{\omega}^1_2}\wedge (dy+ \bv{\Theta}^2 )-{\bv{\omega}^3_1}\wedge dz=0,\quad
        {\bv{\omega}^1_2}\wedge dx-{\bv{\omega}^2_3}\wedge dz&=d\bv{\vartheta}^2,\quad
        {\bv{\omega}^3_1}\wedge dx-{\bv{\omega}^2_3}\wedge (dy+ \bv{\Theta}^2 )=0.
\end{align}
Consequently, we have a non-vanishing connection form given by
\begin{align}
    \bv{\omega}^1_2=
    \frac{b^2y\delta(x,y)}{bx+2\pi(x^2+y^2)}dx
    -b\delta(x,y)dy,
    \quad \bv{\omega}^2_3=\bv{\omega}^3_1=0.
\end{align}
Inserting the corresponding result into Cartan's second structure equation (\ref{Eq:CartanEquationsCurv})$_2$, we obtain the curvature responsible for the plastic deformation field $\bv{\vartheta}$:
\begin{align}
\bv{\Omega}^1_2=&
    -\left(b\delta(y)\frac{d}{dx}\delta(x)
    +
    \pd{}{y}\frac{b^2y\delta(x,y)}{bx+2\pi(x^2+y^2)}
    \right)
    dx\wedge dy.
\end{align}
This curvature signifies the presence of wedge disclinations that localized and extend uniformly along the $z$-axis.
The dipole moment $\bv{P}$ of wedge disclinations on the $xy$-plane ~\cite{pretko_fracton-elasticity_2018} becomes
\begin{align}
    \label{eq:dipole moment}
    \bv{P} = \int_{\mathbb{R}^2} \bv{\Omega}^1_2 \,\bv{x} = (b, 0).
\end{align}
According to the definition of the dipole moment, there exists a wedge disclination dipole at an infinitesimal distance along $x$-axis.
This proves the Theorem.
\end{proof}

Notably, the dipole moment $\bv{P}$ in equation (\ref{eq:dipole moment}) is perpendicular to the Burgers vector $\bv{b}=(0,b,0)$.
This observation aligns with the previous study~\cite{pretko_fracton-elasticity_2018}, demonstrating the validity of the above theorem.
Figures \ref{fig:bipolar coordinate}(e) and (f) show the plastic deformation field distribution $\bv{\Theta}^2$ generated by a single edge dislocation and a wedge disclination dipole with an infinitesimal separation distance.
From the geometric equivalence between these two configurations, we can draw an important conclusion: the edge dislocation represents the dipole moment of wedge disclinations.
This analogy closely parallels the concepts of electric polarization in an electric dipole and magnetic spin in a magnetic dipole.
This relationship can also be readily confirmed through a straightforward mathematical limiting process.
As explained previously, equation (\ref{Eq:EdgeDislocationArrayCoframe}) describes the plastic deformation fields of a wedge disclination dipole.
Considering the mathematical limits of the plastic deformation gradient $\bv{\Theta}^2$, we obtain
\begin{align}
    \label{eq:LimitDipole}
    \lim_{L^\pm\to 0} \frac{\Theta^2_1}{N} = -\frac{b}{2\pi}\frac{y}{x^2+y^2},\quad
    \lim_{L^\pm\to 0} \frac{\Theta^2_2}{N} = \frac{b}{2\pi}\frac{x}{x^2+y^2}.
\end{align}
This is precisely the $\bv{\Theta}^2$ of a single edge dislocation, as given in equation (\ref{eq:SingleEdgeDislocationDisplacementGradient}).
This provides an alternative proof to Theorem \ref{thm:DipoleMomentum}.

\section{Mechanical fields of wedge disclinations}

\subsection{Riemannian holonomy for Frank vector analysis}

Riemann--Cartan manifolds represent plastic deformation fields $\bv{\vartheta}$ in two distinct forms: the Weitzenb\"ock $(\mathcal{M}, g, \nabla^W)$ and Riemannian manifolds $(\mathcal{M}, g, \nabla^L)$.
These manifolds are related by replacing the connection while preserving the Riemannian metric $g$.
Because of the inherent arbitrariness of the connection, we demonstrated that edge dislocations and wedge disclinations are geometrically related rather than independent entities.
This implies that standard geometric analysis on a Riemannian manifold can be applied to the study of dislocations defined on a Weitzenb\"ock manifold.
A particularly attractive example is Riemannian holonomy, which is a generalization of parallel transport for vectors in a curved space.

Let us consider the parallel transportation of a vector $\bv{X}$ along a smooth and closed curve $\bv{c}=\bv{c}(t)$ on a Riemannian manifold.
Parallel transportation is defined by the following differential equation \cite{nakahara_geometry_2003}:
\begin{align}
  \label{eq:vector parallel transportation}
     \frac{d\bv{X}}{dt}+\omega^i_{jk}\frac{dc^j}{dt} X^k \bv{e}_i=0,
\end{align}
where $\omega^i_{jk}$ is the coefficient of the Levi-Civita connection 1-form defined by $\bv{\omega}^i_j=\omega^i_{jk}d\bv{x}^k$.
We express the initial and final vectors of the parallel transportation by $\bv{X}_s$ and $\bv{X}_e$, respectively.
These vectors are related by the linear transformation $\bv{X}_e= \bv{\Phi} \bv{X}_s$, where $\bv{\Phi}$ is the Frank matrix, whose components are expressed in the following form~\cite{puntigam_volterra_1997,fumeron_introduction_2023}:
\begin{align}
  \label{eq:FrankMatrix}
  \Phi^i_j=\mathcal{P} \exp\biggl(\oint_c -\bv{\omega}^i_{j}\biggr).
\end{align}
In this equation, $\mathcal{P}$ is the path-ordering operator.

We apply the Riemannian holonomy analysis to the edge dislocation configurations, as shown in figure \ref{fig:bipolar coordinate}(a).
Mathematically, this configuration is expressed by the Weitzenb\"ock manifold as it expresses the dislocations by torsion through the connection $\nabla^W$.
According to Theorem \ref{thm:Dipole}, however, this configuration is geometrically equivalent to a disclination dipole (see figure \ref{fig:bipolar coordinate}(b)), which is expressed as a Riemannian manifold.
Let $c^\pm$ be a closed curve encircling the foci located at $\bv{x}^\pm=(\pm L^\pm,0)$.
From the analytical Riemannian connection given in equation (\ref{eq:Curvature2FormFiniteArray}), we obtain the Frank matrix for the two curves $c^\pm$, such that:
\begin{align}
  \bv{\Phi}(c^+)=\begin{pmatrix}
    \cos \phi &-\sin \phi &0\\
    \sin \phi &\cos \phi &0\\
    0&0&1
  \end{pmatrix},
  \quad
  \bv{\Phi}(c^-)=\begin{pmatrix}
    \cos \phi &\sin \phi &0\\
    -\sin \phi &\cos \phi &0\\
    0&0&1
  \end{pmatrix}.
\end{align}
This result indicates that when a vector $\bv{X}_s$ in the $xy$-plane undergoes parallel transport, it experiences an angular change of $\phi$ after the trandportation to $\bv{X}_e$, irrespective of the specific closed curve $c^+$ encircling the focal point located at $\bv{x}^+$.
Performing the same operation along a closed curve $c^-$ surrounding $\bv{x}^-$ results in an angular change in $-\phi$.
This result aligns with the properties of the wedge disclination shown in figure~\ref{fig:VolterraDefects}(f).
Furthermore, the magnitude of the rotation angle $\phi$ is quantitatively consistent with the results of Theorem \ref{thm:Dipole}.
Therefore, we can conclude that a wedge disclination dipole exists at both ends of the edge dislocation array (figures \ref{fig:bipolar coordinate}(a) and (b)).
Additionally, by applying a similar analysis to a semi-infinite edge dislocation array, we can conclude that a wedge disclination monopole exists at the coordinate origin.
These are also mathematical proofs of Theorems \ref{thm:Dipole} and \ref{thm:Monopole} using the Riemannian holonomy.
Finally, in the holonomy analysis surrounding a single edge dislocation, the Frank matrix becomes the identity matrix because curve $c$ always encloses a wedge disclination dipole, causing the rotation angles to cancel out.

\subsection{Cauchy--Riemann equations and complex potential}

The geometric equivalence between edge dislocations and wedge disclinations introduces another remarkably new framework: complex function analysis for disclinations.
In a previous study~\cite{kobayashi_biot--savart_2024}, we demonstrated that Cartan's first structure equation and Helmholtz decomposition for the plastic deformation field $\bv{\vartheta}$ associated with dislocations are mathematically equivalent to two distinct sets of equations from other disciplines: electromagnetic field equations and the Cauchy--Riemann equations from complex function analysis.
The latter describes the conformal properties of plastic deformation fields generated around dislocations.
Using the mathematical framework, we demonstrated that the plastic deformation fields of wedge disclinations can be derived from a single complex potential~\cite{kobayashi_biot--savart_2024}.
Considering the linearity of Cartan's first structure equation with the Weitzenb\"ock connection and the geometrical equivalence established via Theorems \ref{thm:Dipole} and \ref{thm:Monopole}, we can define the complex potential for wedge disclinations.

First, we examine the wedge disclination dipole, as shown in figure \ref{fig:bipolar coordinate}(b).
The orthogonality of the plastic displacement fields $\bv{\Theta}^2$ given in equation (\ref{Eq:EdgeDislocationArrayCoframe}) is obvious because it can be expressed in bipolar coordinates $(\sigma, \tau)$, which is an orthonormal coordinate system.
This remarkable mathematical property allows the introduction of the complex potential for the wedge disclination dipole.

\begin{theorem}[Plastic potential for the disclination dipole]
\label{thm:PlasticPotentialDipole}
Let $\Psi^D$ be a complex function of the form
\begin{align}
\label{eq:PlasticPotentialDipole}
    \Psi^D(z)=-\frac{\mathrm{i} b\rho}{2\pi}
    \left(
        (L+z)\ln(-(L+z)) + (L-z)\ln(L-z)
    \right),
\end{align}
defined on the complex plane as $z = x + \mathrm{i}y$.
Then the plastic displacement gradients of the wedge disclination dipole are obtained from the potential function such that $\Theta^2_1 = \mathrm{Re}( d \Psi^D/d z )$ and $\Theta^2_2 = -\mathrm{Im}( d \Psi^D/d z )$.
\end{theorem}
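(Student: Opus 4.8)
The plan is to prove the theorem by directly differentiating the holomorphic potential $\Psi^D$ and matching the real and imaginary parts of $d\Psi^D/dz$ against the explicit plastic displacement gradient $\bv{\Theta}^2$ already obtained in equation (\ref{Eq:EdgeDislocationArrayCoframe}), specialised to the symmetric dipole $L^-=L^+=L$ that underlies the bipolar description. Because the asserted identities are $\Theta^2_1=\mathrm{Re}(d\Psi^D/dz)$ and $\Theta^2_2=-\mathrm{Im}(d\Psi^D/dz)$, it is enough to establish the single complex identity $d\Psi^D/dz=\Theta^2_1-\mathrm{i}\,\Theta^2_2$; the appearance of the conjugate combination is exactly what the Cauchy--Riemann structure invoked earlier predicts for a complex potential.

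First I would differentiate term by term. Using $\frac{d}{dz}\ln(-(L+z))=1/(L+z)$ together with the product rule, and likewise for the second term, the additive constants cancel and one is left with
\begin{align}
\frac{d\Psi^D}{dz}=-\frac{\mathrm{i}b\rho}{2\pi}\Bigl(\ln(-(L+z))-\ln(L-z)\Bigr)=-\frac{\mathrm{i}b\rho}{2\pi}\ln\frac{-(L+z)}{L-z}.
\end{align}
Writing $\ln w=\ln|w|+\mathrm{i}\,\mathrm{arg}(w)$ and distributing the prefactor $-\mathrm{i}$ sends the modulus into the imaginary part and the argument into the real part, giving $\mathrm{Re}(d\Psi^D/dz)=\frac{b\rho}{2\pi}\mathrm{arg}\bigl(-(L+z)/(L-z)\bigr)$ and $\mathrm{Im}(d\Psi^D/dz)=-\frac{b\rho}{2\pi}\ln\bigl|-(L+z)/(L-z)\bigr|$. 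The modulus is elementary, $\bigl|-(L+z)/(L-z)\bigr|=\sqrt{((x+L)^2+y^2)/((x-L)^2+y^2)}$, so its logarithm reproduces the $\Theta^2_2$ component of (\ref{Eq:EdgeDislocationArrayCoframe}) up to an overall minus sign, establishing $\Theta^2_2=-\mathrm{Im}(d\Psi^D/dz)$ immediately.

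The delicate part, which I expect to be the main obstacle, is matching the argument term with $\Theta^2_1$. This reduces to the branch identity $\mathrm{arg}\bigl(-(x+\mathrm{i}y)\bigr)=\mathrm{arg}(y/x)$ in the paper's nonstandard convention, in which $\mathrm{arg}(y/x)$ is the polar angle measured from the negative $x$-axis with a cut along $y=0,\ x>0$. The factor $-(L+z)$ inside the first logarithm (rather than $L+z$) is chosen precisely so that the principal branch of the complex logarithm aligns with this convention; I would verify $\mathrm{arg}(-(x+\mathrm{i}y))=\mathrm{arg}(y/x)$ quadrant by quadrant against the piecewise definition and confirm that the branch cut of $\ln(-(L+z)/(L-z))$ lies on the interfocal segment, consistent with the jump discontinuity of the dipole field. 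With the correspondences $\mathrm{arg}(-(L+z))=\mathrm{arg}(y/(x+L))$ and $\mathrm{arg}(L-z)=\mathrm{arg}(y/(x-L))$ in hand, the argument of the quotient equals $\mathrm{arg}(y/(x+L))-\mathrm{arg}(y/(x-L))$, which is exactly the $\Theta^2_1$ component, and the theorem follows.
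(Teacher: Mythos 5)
Your proposal is correct and follows essentially the same route as the paper: both reduce the claim to the identity $d\Psi^D/dz=-\frac{\mathrm{i}b\rho}{2\pi}\left(\ln(-(L+z))-\ln(L-z)\right)$ and then match real and imaginary parts against equation (\ref{Eq:EdgeDislocationArrayCoframe}). The only differences are cosmetic: the paper computes the derivative via the Wirtinger operator $d/dz=(\partial/\partial x-\mathrm{i}\,\partial/\partial y)/2$ while you use the product rule in $z$ directly, and you make explicit the branch identity $\mathrm{arg}(-(x+\mathrm{i}y))=\mathrm{arg}(y/x)$ (in the paper's convention) that the paper subsumes under ``direct calculations''.
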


\begin{proof}
The partial derivatives of the complex potential are $\partial \Psi^D/\partial x = -(\mathrm{i}b\rho/2\pi)(\ln(-(L+z))-\ln(L-z))$ and $\partial \Psi^D/\partial y =(b\rho/2\pi)(\ln(-(L+z))-\ln(L-z))$.
By definition, the differential operator with respect to complex variable $z$ is $d/dz=(\partial/\partial x-\mathrm{i}\partial/\partial y)/2$.
Therefore, we have
\begin{align}
    \frac{d\Psi^D}{dz}
    =\frac{1}{2}\left(
    \frac{\partial }{\partial x}-\mathrm{i}\frac{\partial}{\partial y}
    \right)\Psi^D
    = -\frac{\mathrm{i}b\rho}{2\pi}(\ln(-(L+z))-\ln(L-z)).
\end{align}
Through direct calculations, we can confirm that $\mathrm{Re}( d \Psi^D/d z )$ and $-\mathrm{Im}( d \Psi^D/d z )$ correspond to $\Theta^2_1$ and $\Theta^2_2$ of the plastic displacement fields given in equation (\ref{Eq:EdgeDislocationArrayCoframe}).
\end{proof}

\begin{theorem}[Plastic potential for the disclination monopole]
\label{thm:PlasticPotentialMonopole}
Let $\Psi^M$ be a complex function of the form
\begin{align}
\label{eq:PlasticPotentialMonopole}
    \Psi^M(z)=-\frac{\mathrm{i} b\rho}{2\pi}z(\ln(-z)-1),
\end{align}
defined on the complex plane as $z = x + \mathrm{i}y$.
Then the plastic displacement gradients of the wedge disclination monopole is obtained from the potential function such that $\Theta^2_1 = \mathrm{Re}( d \Psi^M/d z )$ and $\Theta^2_2 = -\mathrm{Im}( d \Psi^M/d z )$.
\end{theorem}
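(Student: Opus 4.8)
The plan is to follow the proof of Theorem~\ref{thm:PlasticPotentialDipole} verbatim in structure, computing the Wirtinger derivative $d\Psi^M/dz = (\partial/\partial x - \mathrm{i}\,\partial/\partial y)\Psi^M/2$ and then matching its real and imaginary parts against the monopole displacement gradient recorded in equation~(\ref{eq:ThetaWedgeMonopole}). Since $\Psi^M$ is holomorphic away from its branch cut, I would first simplify the derivative directly. Using $\frac{d}{dz}\ln(-z)=1/z$ together with the product rule gives
\begin{align}
    \frac{d\Psi^M}{dz}
    = -\frac{\mathrm{i}b\rho}{2\pi}\Bigl((\ln(-z)-1) + z\cdot\tfrac{1}{z}\Bigr)
    = -\frac{\mathrm{i}b\rho}{2\pi}\ln(-z).
\end{align}
The same expression follows from computing $\partial\Psi^M/\partial x$ and $\partial\Psi^M/\partial y$ separately, as done in the dipole case, so this part is routine.

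The decisive step is to decompose the complex logarithm into modulus and argument,
\begin{align}
    \ln(-z) = \ln\sqrt{x^2+y^2} + \mathrm{i}\,\mathrm{arg}\Bigl(\frac{y}{x}\Bigr),
\end{align}
which I expect to be the main obstacle. This identity is not merely formal: it requires verifying that the imaginary part of the principal branch of $\ln(-z)$ coincides exactly with the piecewise function $\mathrm{arg}(y/x)$ introduced earlier. Writing $-z = -x - \mathrm{i}y$ and comparing $\mathrm{arg}(-z)$ against the four cases of that definition confirms the match, with the branch cut correctly placed on the positive $x$-axis (the locus $y=0,\,x>0$ where $\mathrm{arg}$ is singular). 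This is precisely the branch structure expected of a single wedge disclination emanating along the $z$-axis.

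Finally, substituting this decomposition and separating real and imaginary parts yields
\begin{align}
    \frac{d\Psi^M}{dz}
    = \frac{b\rho}{2\pi}\,\mathrm{arg}\Bigl(\frac{y}{x}\Bigr)
    - \mathrm{i}\,\frac{b\rho}{2\pi}\ln\sqrt{x^2+y^2},
\end{align}
so that $\mathrm{Re}(d\Psi^M/dz) = \Theta^2_1$ and $-\mathrm{Im}(d\Psi^M/dz) = \Theta^2_2$, exactly reproducing the monopole displacement gradient of equation~(\ref{eq:ThetaWedgeMonopole}). This completes the proof.
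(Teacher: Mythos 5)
Your proof is correct and follows essentially the same route as the paper: compute $d\Psi^M/dz = -\frac{\mathrm{i}b\rho}{2\pi}\ln(-z)$ and match real and imaginary parts against equation~(\ref{eq:ThetaWedgeMonopole}). The only difference is that you flesh out the ``direct calculations'' the paper leaves implicit, by explicitly verifying that $\mathrm{Im}\,\ln(-z)$ agrees with the piecewise definition of $\mathrm{arg}(y/x)$ and that the branch cut falls on $y=0,\ x>0$, which is a welcome but not essentially different addition.
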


\begin{proof}
The partial derivatives of the potential are $\partial \Psi^M/\partial x = -(\mathrm{i} b\rho/2\pi)\ln{(-z)}$ and $\partial \Psi^M/\partial y = ( b\rho/2\pi)\ln{(-z)}$.
Hence, we have
\begin{align}
\frac{d\Psi^M}{dz}
=\frac{1}{2}\left(
\frac{\partial }{\partial x}-\mathrm{i}\frac{\partial}{\partial y}
\right)\Psi^M
=-\frac{\mathrm{i} b\rho}{2\pi}\ln{(-z)}.
\end{align}
Through direct calculations, we can confirm that $\mathrm{Re}( d \Psi^M/d z)$ and $-\mathrm{Im}( d \Psi^M/d z)$ correspond to $\Theta^2_1$ and $\Theta^2_2$ of the plastic displacement fields given in equation (\ref{eq:ThetaWedgeMonopole}).
\end{proof}

Figures \ref{fig:complex potentials}(a) and (b) show the real and imaginary parts of the complex potential of the disclination dipole, respectively.
The imaginary part $\mathrm{Im}(\Psi^D)$ is a single-valued continuous function that decreases monotonically with increasing distance from a dislocation array.
Meanwhile, the real part $\mathrm{Re}(\Psi^D)$ contains two branch points $z=\pm L$ corresponding to the positions of the disclination dipole.
A notable feature here is the jump discontinuity along the line $\mathcal{L}^D=\{x+\mathrm{i}y\mid y=0,\ x> -L\}$.
Mathematically, this represents a branch cut, indicating that the potential is multivalued.
The jump height on the branch cut $\llbracket \mathrm{Re}(\Psi^D) \rrbracket_{\mathcal{L}^D}$ is calculated as follows:
\begin{align}
    \label{eq:jump height dipole}
    \llbracket \mathrm{Re}(\Psi^D) \rrbracket_{\mathcal{L}^D} =\begin{cases}
        -\dfrac{bN}{2L}(x+L) & -L< x < L\\
        -bN & x\geq L
    \end{cases},
\end{align}
where $N$ is the number of dislocations in the array.
The jump height $\llbracket \mathrm{Re}(\Psi^D) \rrbracket_{\mathcal{L}^D}$ can be interpreted as the magnitude of the extra half-planes associated with the edge dislocations, as illustrated in figure \ref{fig:bipolar coordinate}(a).
Specifically, the constant $-bN$ on $x\geq L$ represents the total magnitude of the Burgers vector, while a linear variation occurs in the range $-L<x<L$.
The jump discontinuity encapsulates the topological properties of the disclination dipole.
Additionally, the limit $L\to 0$ gives $\lim_{L\to 0} \Psi^D/N=-(\mathrm{i}b/2\pi)(1+\ln(-z))$, which aligns with the complex potential of a single dislocation given in our previous study up to the sign and an additive constant~\cite{kobayashi_biot--savart_2024}, which in turns aligns with the observation in equation (\ref{eq:LimitDipole}).

A similar discussion holds for the case of a disclination monopole, as shown in figures \ref{fig:complex potentials}(c) and (d).
The imaginary part $\mathrm{Im}(\Psi^M)$ of the disclination monopole is a continuous function that is asymmetrically distributed with respect to the imaginary axis, which corresponds to the asymmetric arrangement of the dislocation array.
Meanwhile, the real part $\mathrm{Re}(\Psi^M)$ is a multivalued function with a branch point at $z=0$ and a branch cut along the line $\mathcal{L}^M=\{x+\mathrm{i}y \mid y=0,\ x>0\}$ containing the jump discontinuity.
The jump height along the branch cut is $\llbracket \mathrm{Re}(\Psi^M) \rrbracket_{\mathcal{L}^M}=-b\rho x$, which forms a linear function similar to the disclination dipole described in equation (\ref{eq:jump height dipole}).
This demonstrates that the complex potential encapsulates the topological change due to defects, as revealed for the disclination dipole case.
Furthermore, direct calculations confirm that the complex potential of the disclination dipole (\ref{eq:PlasticPotentialDipole}) can be obtained by the superposition of those of the disclination monopole up to a constant: $\Psi^D(z)=\Psi^M(z+L)-\Psi^M(z-L)+\mathrm{const}.$

\begin{figure}[!h]
    \centering
    \includegraphics{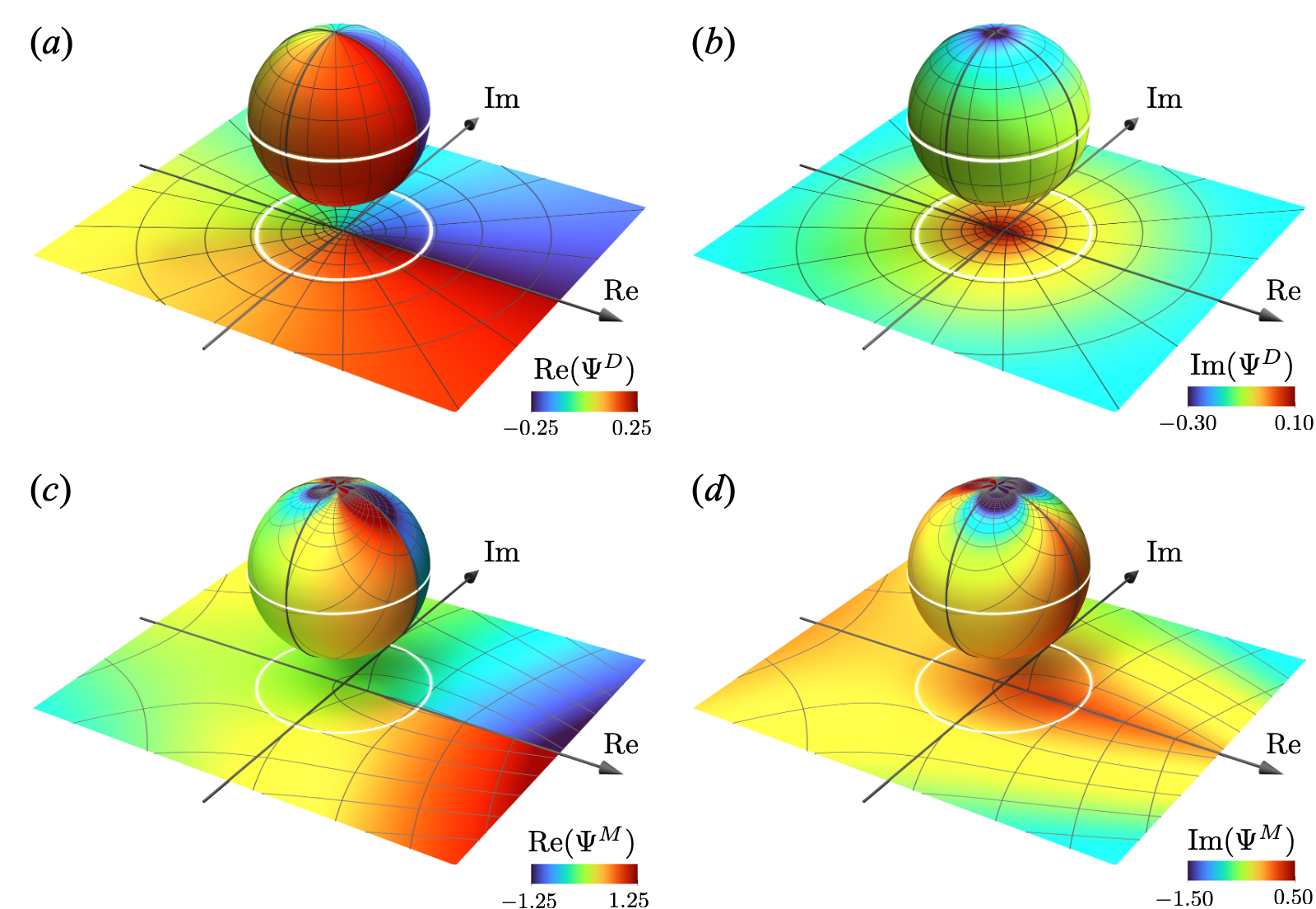}
    \caption{
        Plastic deformation potential of the disclination dipole $\Psi^D$ and monopole $\Psi^M$, plotted on Riemann spheres and their stereographic projections onto the complex planes.
        The projection $(x^1, x^2, x^3)\in\mathbb{R}^3 \mapsto (x+\mathrm{i}y)\in\mathbb{C}$ is defined by a map $x+\mathrm{i}y=(x^1+\mathrm{i}x^2)/(1+x^3)$.
        Here, the equator of the sphere is projected as a white circle onto the complex plane.
        (a) and (b) Real and imaginary parts of the complex potential $\Psi^D$ of the disclination dipole with $b=\rho=1$ and $L=1/4$.
        (c) and (d) Real and imaginary parts of the monopole $\Psi^M$ with $b=\rho=1$.
    }
    \label{fig:complex potentials}
\end{figure}

\subsection{Stress fields of the wedge disclinations}
\label{sec:stress disclination}

Finally, we examine the elastic stress fields associated with disclinations.
Current geometric theory constructs a plastically deformed state on a Riemann--Cartan manifold. Mathematically, these states are incompatible with Euclidean geometry; therefore, the manifold cannot be embedded in Euclidean space in its original form.
To address this issue, it is necessary to introduce a compensating elastic deformation that resolves the incompatible geometric frustration~\cite{kobayashi_geometrical_2024}.
Generally, the stress equilibrium equation for elasticity is non-linear, rendering an analytical solution unfeasible.
However, by linearizing this equation, we can analytically construct an elastic stress field for specific defect configurations~\cite{kobayashi_biot--savart_2024}.
In this study, we apply the same approach to analyse the stress fields of wedge disclinations.

Linearization of the equilibrium equations requires linearization of the kinematics using the Cauchy strain~\cite{kobayashi_biot--savart_2024}.
First, we introduce the total, plastic, and elastic Cauchy strains, defined by
\begin{align}\label{eq:lin cauchy strains}
    \bv{\mathcal{E}}_t={}\frac{1}{2}(\bv{\nabla u}+\bv{\nabla u}^T),
    \quad
    \bv{\mathcal{E}}_p={}\frac12(\bv{\Theta}+\bv{\Theta}^T),
    \quad
    \bv{\mathcal{E}}_e=\bv{\mathcal{E}}_t-\bv{\mathcal{E}}_p.
\end{align}
The linearized stress can be expressed using Hooke's law as $\bv{\sigma}=\bv{C}:\bv{\mathcal{E}}_e$, where $\bv{C}$ is a stiffness tensor.
The coefficients of $\bv{C}$ are expressed using the Poisson ratio $\nu$ and shear modulus $\mu$ as $C^{ijkl}=\frac{2\nu\mu}{1-2\nu}\delta^{ij}\delta^{kl} + \mu (\delta^{ik}\delta^{jl}+\delta^{il}\delta^{jk})$.
Subsequently, the linearized stress equilibrium equation is given by $\nabla\cdot\bv{\sigma}=\bv{0}$. In terms of the plastic $\bv{\mathcal{E}}_p$ and total strains $\bv{\mathcal{E}}_t$, it can be rewritten as follows:
\begin{align}
    \label{eq:stress equil}
    \nabla\cdot(\bv{C}:\bv{\mathcal{E}}_t)=\nabla\cdot (\bv{C}:\bv{\mathcal{E}}_p).
\end{align}
By inserting the plastic displacement gradient $\bv{\Theta}^2$ of the wedge disclination dipole (\ref{Eq:EdgeDislocationArrayCoframe}) into the plastic strain definition (\ref{eq:lin cauchy strains})$_2$ we have the following:
\begin{align}
    \bv{\mathcal{E}}_p = 
    \frac{b \rho}{4\pi}
    \begin{pmatrix}
        0&-\sigma&0\\
        -\sigma& 2\tau&0\\
        0&0&0
    \end{pmatrix},
\end{align}
where $(\sigma, \tau)$ are the bipolar coordinates.
By inserting the result into the right-hand side of equation (\ref{eq:stress equil}) and using the convolution integration of the edge dislocation displacements reported in our previous study~\cite{kobayashi_biot--savart_2024},
we obtain the linearized total displacement field $\bv{u}$ as
\begin{align}
    \bv{u}
    =\frac{D}{2\mu}
    \left(
        (3-4\nu)L
        -2y(1-\nu) \sigma-(1-2\nu)((x+L)\ln d^{-}-(x-L)\ln d^{+}),
        y\tau,
        0
    \right),
\end{align}
where $D=\frac{\mu b \rho}{2\pi(1-\nu)}$.
Consequently, we obtain the linearized stress fields of the wedge disclination dipole as follows:
\begin{align}
    \bv{\sigma}=&{}
    -D
    \begin{pmatrix}
        \dfrac{y^2}{(d^{-})^2} - \dfrac{y^2}{(d^+)^2} + \tau &
        -\dfrac{(x+L)y}{(d^{-})^2}+\dfrac{(x-L)y}{(d^+)^2} &
        0\\
        -\dfrac{(x+L)y}{(d^{-})^2}+\dfrac{(x-L)y}{(d^{+})^2} &
        -\dfrac{y^2}{(d^{-})^2} + \dfrac{y^2}{(d^{+})^2} + \tau &
        0 \\
        0 & 0 & 2\nu \tau \\
    \end{pmatrix}.
\end{align}
The results are in complete agreement with those of the previous study~\cite{li_disclination_1972}.

The stress fields of the wedge dislocation monopole are determined using the same method.
In terms of log-polar coordinates $(\varrho,\theta)$, the plastic Cauchy strain $\bv{\mathcal{E}}_p$ of the monopole becomes
\begin{align}
    \bv{\mathcal{E}}_p = 
    \frac{b \rho}{4\pi}
    \begin{pmatrix}
        0&\theta&0\\
        \theta& 2\varrho&0\\
        0&0&0
    \end{pmatrix}.
\end{align}
After convolution integration with introducing additional terms to cancel out uniformly diverging terms as in the proof of Theorem \ref{thm:Monopole}, we obtain the total displacement, such that
\begin{align}
    \bv{u}
    =&{}
    \frac{D}{2\mu}
    \left(
        2(1-\nu) y\theta-(1-2\nu)x(\varrho-1),
        y(\varrho-1),
        0
    \right).
\end{align}
By substituting this into the stress equilibrium equation and using Hooke's law, the elastic Cauchy strain and linearized stress are obtained as follows:
\begin{align}
    \bv{\sigma}=&{}
    -D
    \begin{pmatrix}
        \dfrac{y^2}{r^2}+\varrho &
        -\dfrac{x y}{r^2} &
        0 \\
        -\dfrac{x y}{r^2} &
        -\dfrac{y^2}{r^2}+\varrho &
        0 \\
        0 & 0 & 2\nu\varrho \\
    \end{pmatrix}.
\end{align}
Again, these results agree up to constants with those of a previous study~\cite{dewit_theory_1973}.

The coincidence of the elastic stress fields has two major implications.
The first is the validation of the mathematical analysis presented in this study. Because theoretical analyses of Volterra defects are limited, objective validation using previous reports is challenging. To address this issue, this study adopted a mathematical approach, formulating the analysis as a theorem with rigorous proof.
Nevertheless, consistency of the results with existing analyses is essential to demonstrate the applicability of our findings.
The second implication is the potential extension of this theory to non-linear mechanics. The agreement between the linearized theory and existing results establishes the present framework as a natural extension of the conventional theory into the non-linear mechanics domain. This theoretical framework is expected to drive further advancements in disclination analysis.

\section{Conclusion}

In this study, we developed a mathematical model for Volterra defects using differential geometry on Riemann--Cartan manifolds and systematically examined their relationships by analytically solving plastic deformation fields. Based on the results, the main conclusions of this study can be summarized as follows.

\renewcommand{\labelenumi}{(\arabic{enumi})}
\begin{enumerate}
    \item 
    We introduced Volterra deformations as translational and rotational deformations with respect to the three coordinate axes and defined Cartan's moving frame as a mathematical representation of the plastic deformation field. For dislocation analysis, we applied the Weitzenb\"ock connection to Cartan's structure equations, revealing that the dislocation density aligns with the classical definition in the lattice defect theory. Similarly, we examined disclinations using Cartan's structure equations. Although curvature naturally appears in all three cases, excess torsion or curvature components inevitably persist when modelling twist disclinations. This suggests that modifications are required in the Volterra process itself.
    \item 
    From a mathematical perspective, the connection choice in the Riemann--Cartan manifold is not unique, allowing for the replacement of the Weitzenb\"ock and Levi-Civita connections. This mathematical flexibility establishes a geometric equivalence between dislocations and disclinations as topological defects. To demonstrate this, we examined specific cases where the existence of wedge disclinations has previously been suggested phenomenologically. By leveraging connection replacement and analytical solutions for plastic deformations derived via the Biot--Savart law, we provide a rigorous mathematical proof of the existence of wedge disclinations at the terminal points of the edge dislocation array. This finding further clarifies the geometric relationship between the two topological defects: the edge dislocation serves as the momentum of the wedge disclination dipole. Similarly, we proved that isolating a wedge disclination monopole is geometrically feasible using a semi-infinite edge dislocation array. Furthermore, we revealed that the plastic deformation fields can be represented by an orthogonal coordinate system. This result indicates that plastic deformation fields are inherently conformal.
    \item 
    We analysed the plastic deformation fields of wedge disclinations from multiple perspectives. First, we demonstrated that Riemannian holonomy, a generalization of parallel transport for vectors, can quantitatively measure the Frank vector of a disclination. By leveraging the mathematical equivalence between Cartan's structure equations for plasticity and the Cauchy--Riemann equations in complex function theory, we constructed complex potentials for the plastic deformation of wedge disclinations, elucidating their topological properties, including the jump discontinuity. Finally, we performed a stress field analysis for wedge disclinations. After applying geometric and constitutive linearization, we obtained analytical expressions for the stress fields, which were consistent with the findings of previous studies.
\end{enumerate}

\section*{Acknowledgment}
This work was supported by JST, PRESTO, Japan (Grant Number JPMJPR1997) and JSPS KAKENHI (Grant Numbers JP23K13221 and JP24K00761).

\section*{Data Accessibility}
Data and relevant codes are accessible through Dryad \cite{dryad} and Zenodo \cite{zenodo} repositories, respectively.


\end{document}